\newtheorem{theorem}{Theorem}
\newtheorem{lemma}[theorem]{Lemma}
\newtheorem{definition}[theorem]{Definition}
\newtheorem{proposition}[theorem]{Proposition}
\def\nr{\mathbf{r}}
\def\nb{\mathbf{b}}
\title{On balanced 4-holes in bichromatic point sets
\thanks{This is an arxiv version of \cite{bdfprauv-bh-15}.}}
\author{
S. Bereg\thanks{Department of Computer Science, University of Texas at Dallas, USA. Email: besp@utdallas.edu.
Partially supported by project MEC MTM2009-08652.}
\and 
J. M. D\'{\i}az-B\'{a}\~{n}ez\thanks{Departamento Matem\'atica Aplicada II, Universidad de Sevilla, Spain. Email: \{dbanez,iventura\}@us.es. Partially supported by project MEC MTM2009-08652
and ESF EUROCORES programme EuroGIGA-ComPoSe IP04-MICINN Project EUI-EURC-2011-4306.}
\and 
R. Fabila-Monroy\thanks{Departamento de Matem\'aticas, Cinvestav, Distrito Federal, M\'exico. Email: ruyfabila@math.cinvestav.edu.mx. Partially supported by grant 153984 (CONACyT, Mexico).}
\and 
P. P\'{e}rez-Lantero\thanks{Departamento de Matem\'atica y Ciencia de la Computaci\'on,
Universidad de Santiago, Santiago, Chile. Email: pablo.perez.l@usach.cl.
Partially supported by grant CONICYT, FONDECYT/Iniciaci\'on 11110069 (Chile) 
and project MEC MTM2009-08652.}
\and 
A. Ram\'{\i}rez-Vigueras\thanks{Instituto de Matem\'aticas, UNAM, Mexico. Email: adriana.rv@im.unam.mx. 
Partially supported by CONACyT, Mexico.}
\and 
T. Sakai\thanks{Research Institute of Educational Development, Tokai University, Japan.
Email: sakai@tokai-u.jp. Supported by JSPS KAKENHI Grant Number 24540144.}
\and 
J. Urrutia\thanks{Instituto de Matem\'aticas, UNAM, Mexico. Email: urrutia@matem.unam.mx. 
Partially supported by project MEC MTM2009-08652.}
\and 
I. Ventura\footnotemark[2]
}
\begin{document}
\maketitle

\begin{abstract}
Let $S=R\cup B$ be a point set in the plane in general position
such that each of its elements is colored either red or blue, where
$R$ and $B$ denote the points colored red and the points colored blue, 
respectively.
A quadrilateral with vertices in $S$ is called a $4$-hole if its interior is
empty of elements of $S$.
We say that a $4$-hole of $S$ is balanced if it has $2$ red and $2$ blue points of
$S$ as vertices.
In this paper, we prove that if $R$ and $B$ contain $n$ points each then
$S$ has at least $\frac{n^2-4n}{12}$ balanced $4$-holes, and this bound is tight
up to a constant factor.
Since there are two-colored point sets with no balanced {\em convex}
$4$-holes, we
further provide a characterization of the two-colored point sets 
having this type of $4$-holes. 
\end{abstract}

\section{Introduction}\label{sec:intro}

Let $S$ be a set of points in the plane in general position.
A \emph{hole} of $S$ is a simple polygon $Q$ with vertices in $S$
and with no element of $S$ in its interior. If $Q$ has
$k$ vertices, it is called a $k$-{\em hole} of $P$.
Note that we allow for a $k$-hole to be non-convex.
We will refer to a hole that is not necessarily 
convex as {\em general hole}, and to a hole that is convex
as {\em convex hole}.
%
The study of convex $k$-holes in point sets has
been an active area of research since 
Erd{\H o}s and Szekeres~\cite{Erdos1,Erdos2} asked about the existence of 
$k$ points in convex position in planar point sets. 
It is known that any point set with at least ten
points contains convex $5$-holes~\cite{HARB}.
Horton~\cite{Hort} proved that for 
$k \geq 7$ there are point sets containing no convex $k$-holes.
The question of the existence of convex $6$-holes remained
open for many years, but recently Nicol\'as~\cite{CNico} proved that any point set
with sufficiently many points contains a convex $6$-hole.
A second proof of this result was subsequently given by Gerken~\cite{GERK}.

Recently, the study of general holes of colored point sets has been
started \cite{AichUrr,aichholzer2010}. Let $S=R \cup B$ be a finite set of points in 
general position in the plane. The elements of $R$ and $B$
will be called, respectively, the \emph{red} and \emph{blue} elements
of $S$, and $S$ will be called a {\em bicolored} point set.  
A $4$-hole of $S$ is {\em balanced} if it has two blue and two red vertices. 

In this paper, we address the following question:
Is it true that any bicolored point set with at least two red and two
blue points always has a balanced $4$-hole?
We answer this question in the positive by showing 
that any bicolored point set $S=R\cup B$ with $|R|=|B|\geq 2$
always has a quadratic number of balanced $4$-holes.
We further characterize bicolored point sets that have
balanced convex $4$-holes.

The study of convex $k$-holes in colored point sets 
was introduced  by Devillers et al.~\cite{devillers}. 
They obtained a bichromatic point set with $18$ points 
that contains no convex monochromatic $4$-hole. 
Huemer and Seara~\cite{clemens} obtained
a bichromatic point set with $36$ points containing no
monochromatic $4$-holes. Later, Koshelev~\cite{koshelev}
obtained another such point set with $46$ elements.
Devillers et al.~\cite{devillers} also proved that  every $2$-colored 
Horton set with at least $64$ elements contains an empty 
monochromatic convex $4$-hole. In the same paper the following conjecture
is posed:  Every sufficiently 
large bichromatic point set contains a monochromatic convex 
$4$-hole. This conjecture remains open, and on the other hand
Aichholzer et al~\cite{aichholzer2010} have proved that any bicolored point 
set always has a monochromatic general $4$-hole.
Recently, a result well related with balanced $4$-holes was proved by
Aichholzer et al~\cite{AichUrrVirg2013}: Every two-colored
linearly-separable point set $S=R\cup B$ with $|R|=|B|=n$ contains at least 
$\frac{1}{15}n^2-\theta(n)$ balanced general $6$-holes. 
In a forthcoming paper, the same authors proved the
lower bound $\frac{1}{45}n^2-\theta(n)$ on such holes in the case
where $R$ and $B$ are not necessarily linearly-separable.
One can note that a balanced $6$-hole with vertices $V$ (even if 
$R$ and $B$ are linearly separable)
does not always imply
a balanced $4$-hole with vertices $V'\subset V$ (see, e.g., Figure~\ref{img:6-hole-not-4-hole}).

\begin{figure}[h]
	\centering
	\includegraphics[scale=0.5]{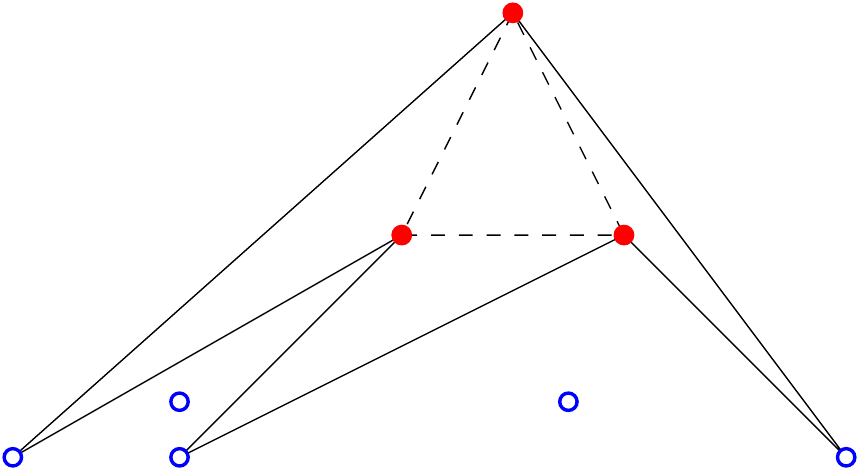}
	\caption{\small{A balanced $6$-hole such that
	no quadruple of its points defines a balanced $4$-hole.
	In the whole paper, red points are represented as solid dots and blue points as tiny circles.}}
	\label{img:6-hole-not-4-hole}
\end{figure}

\medskip

\noindent {\bf Our results:} 
For balanced general $4$-holes, that is, balanced $4$-holes not necessarily
convex, we first show that every bicolored point set 
$S=R\cup B$ with $|R|,|B|\ge 2$ has at least one balanced $4$-hole.
We then prove that if $|R|=|B|=n$ then 
$S$ has at least $\frac{n^2-4n}{12}$ balanced $4$-holes (Theorem~\ref{theo:quadratic-4-holes} of
Section~\ref{sec:lower-bound}), 
and show that this bound is tight up to a constant factor.
This lower bound is improved to $\frac{2n^2+3n-8}{12}$ in the 
case where $R$ and $B$ are linearly separable
(Theorem~\ref{thm:separable-4-holes} of Section~\ref{sec:lower-bound-separable}).
On the other hand, for balanced convex $4$-holes, we
provide a characterization of the bicolored point sets 
$S=R\cup B$ having at least one such hole
(Theorem~\ref{thm:convex-4-holes-not-separable} of Section~\ref{sec:convex-4-holes:no-separability},
and Theorem~\ref{thm:convex-4-hole-separable} of Section~\ref{sec:convex-4-holes:separability}).
Finally, in Section~\ref{sec:discussion}, we discuss extensions of our results such as generalizing the 
above lower bounds
for point sets in which $|R|\neq|B|$, proving the existence of convex $4$-holes either balanced or monochromatic,
deciding the existence of balanced convex $4$-holes, and others.

\medskip

\noindent {\bf General definitions:}
Given any two points $x,y$ of the plane, we denote
by $\overline{xy}$ the straight segment connecting $x$ and $y$,
by $\ell(x,y)$ the line passing through $x$ and $y$,
and by $x\rightarrow y$ the ray that emanates from $x$ and contains $y$.
For every three points $x,y,z$ of the plane, we denote by $\Delta xyz$
the open triangle with vertex set $\{x,y,z\}$. Given $X\subseteq S$,
let $CH(X)$ denote the convex hull of $X$.

Given three non-collinear points $a$, $b$, and $c$, we denote by $\mathcal W(a,b,c)$ the
open convex region bounded by the rays $a \rightarrow b$ and $a \rightarrow c$. 
Given a set $X\subset S$, let $f(a,b,c,X)$ denote a point 
$x\in (X \cap \Delta abc)\cup \{c\}$ minimizing the area of
$\Delta abx$ over all points of $(X \cap \Delta abc)\cup \{c\}$. 

%
 
\section{Lower bounds for general balanced $4$-holes}\label{sec:lower-bound}

It is not hard to see that if $|R|,|B|\ge 2$, then $S$ contains a 
balanced $4$-hole. To prove this,
observe that for every set $H$ of four points there always 
exists a simple polygon
whose vertices are the elements of $H$.
Let $S'$ be a subset of $S$ containing exactly two red points and two blue points,
such that the area of the convex hull of $S'$ is minimum.
Clearly, any simple polygon whose vertex set is $S'$
contains no element of $S$ in its interior, and thus it is
a balanced $4$-hole of $S$. 

On the other hand, if $S$ has
exactly two points of one color and many points of the other color, then
$S$ might contain only a constant number of balanced $4$-holes.
For example, the reader may verify that the point set of
Figure~\ref{img:few-holes} contains exactly five balanced $4$-holes.

\begin{figure}[h]
	\centering
	\includegraphics[scale=0.5]{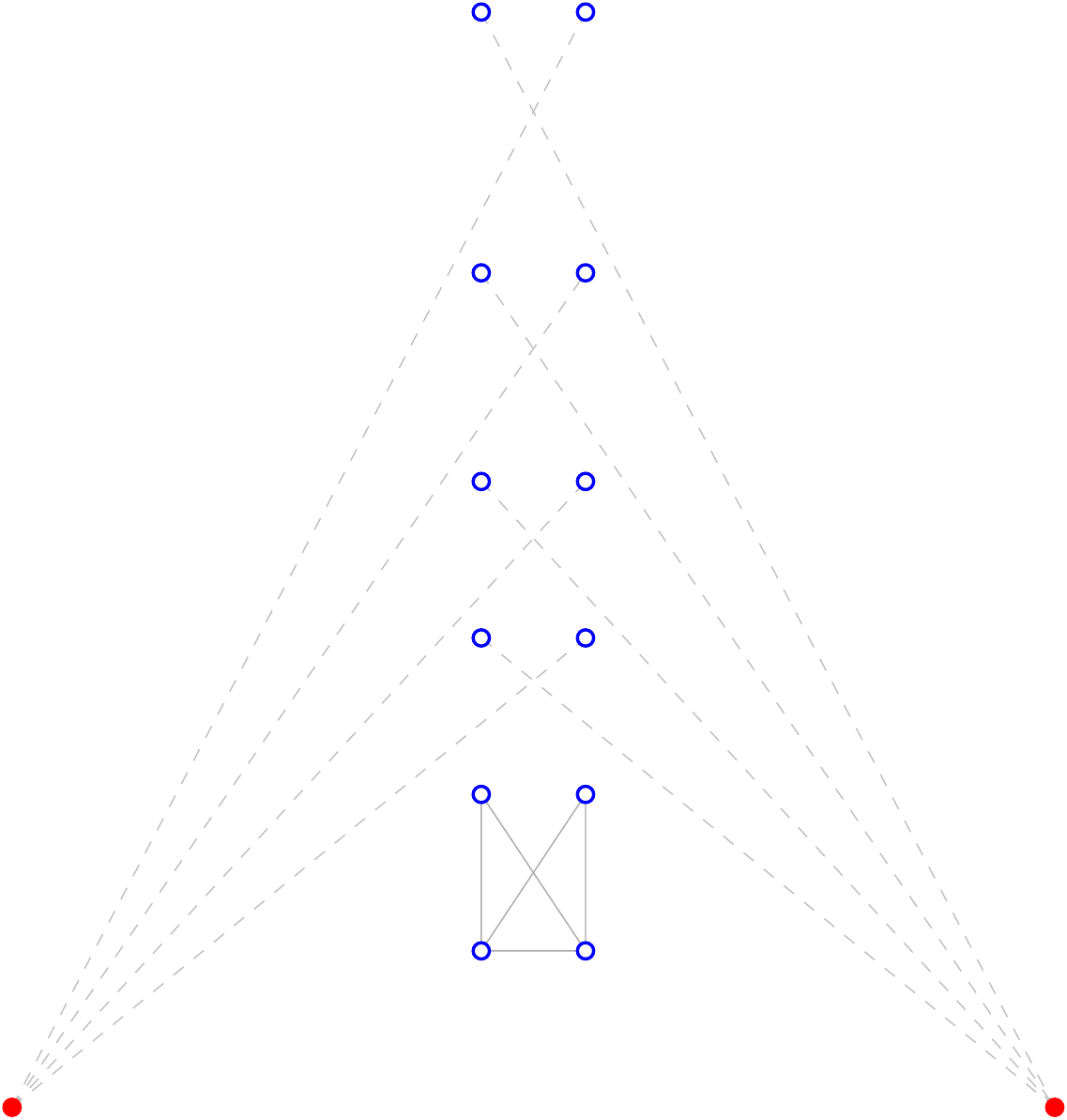}
	\caption{\small{A point set with exactly
	five balanced $4$-holes, obtained by choosing
	the two red points and any pair of blue points connected by a continuous segment.}}
	\label{img:few-holes}
\end{figure}

In the case where $|R|=|B|=n$, $S$ has (at least) a linear
number of balanced $4$-holes. 
Indeed, by applying the
ham-sandwich theorem recursively, we can partition
$S$ into a linear number of constant size disjoint subsets whose convex 
hulls are pairwise disjoint, and each of them contains
at least two red points and two blue points, and has thus a $4$-hole.

In this section we prove the following stronger result:

\begin{theorem}\label{theo:quadratic-4-holes}
Let $S=R \cup B$ be a set of $2n$ points in general position in the plane
such that $|R|=|B|=n$. Then $S$
has at least $\frac{n^2-4n}{12}$ balanced $4$-holes.
\end{theorem}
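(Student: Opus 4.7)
The key geometric observation I would use is: for any red pair $\{r_1, r_2\}$ and blue points $b_1, b_2$ on opposite sides of $\ell(r_1, r_2)$ with both triangles $\Delta r_1 r_2 b_1$ and $\Delta r_1 r_2 b_2$ empty of $S$, the quadrilateral $r_1 b_1 r_2 b_2$ is a balanced $4$-hole. A direct check, splitting by whether the diagonals $\overline{r_1 r_2}$ and $\overline{b_1 b_2}$ cross (convex case) or one of $r_1, r_2, b_1, b_2$ is a reflex vertex (non-convex case), shows that the polygon's interior equals $(\Delta r_1 r_2 b_1 \cup \Delta r_1 r_2 b_2) \setminus \overline{r_1 r_2}$ in every case and is thus empty by hypothesis. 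This yields the lower bound
\[
\#\{\text{balanced } 4\text{-holes}\} \;\geq\; \sum_{\{r_1, r_2\} \subseteq R} |E^+(r_1, r_2)| \cdot |E^-(r_1, r_2)|,
\]
where $E^\pm(r_1, r_2)$ denotes the set of blues on each side of $\ell(r_1, r_2)$ forming an empty triangle with $\{r_1, r_2\}$.

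To translate this into the target bound $\frac{n(n-4)}{12}$, I plan to count incidences $(r, H)$ where $r \in R$ is a vertex of a balanced $4$-hole $H$, aiming to show each $r$ participates in at least $\frac{n-4}{6}$ balanced $4$-holes; summing over $R$ and dividing by $2$ then yields the claim. For a fixed $r \in R$, I would sort the blues cyclically by angle around $r$ as $b_1, \ldots, b_n$, defining $n$ open wedges $W_i = \mathcal{W}(r, b_i, b_{i+1})$ (cyclic indices). No blue point of $S$ lies inside any $W_i$, so the only points possibly interior to $W_i$ are reds. For each wedge, I would invoke the $f$ function to identify a red $r'$: if $\Delta r b_i b_{i+1}$ contains a red, iterate $f(b_i, b_{i+1}, \cdot, R)$ to obtain $r' \in R \cap \Delta r b_i b_{i+1}$ such that the non-convex polygon $r b_i r' b_{i+1}$ (with $r'$ reflex) has empty interior; otherwise, if there is a red across $\overline{b_i b_{i+1}}$ from $r$ within $W_i$, pick $r'$ there via $f$ to obtain a convex balanced $4$-hole. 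Since distinct wedges at $r$ use distinct consecutive-blue pairs, the $4$-holes produced by different wedges are distinct.

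The main obstacle is the quantitative step: showing at least $\frac{n-4}{6}$ of the $n$ wedges at each $r$ yield a $4$-hole by this construction. Some wedges can be doubly deficient, containing no red either in $\Delta r b_i b_{i+1}$ or in the opposite cap, producing no $4$-hole. To control these I expect to argue globally, amortizing across all red vertices so local deficits cancel, or augmenting the construction to exploit non-consecutive blue pairs. The denominator $12 = 2 \cdot 6$ suggests each $4$-hole may be implicitly over-counted up to a multiplicity of $6$ in a natural summation, and the detailed accounting will need to track this carefully—particularly at $4$-holes $r_1 b_1 r_2 b_2$ where the wedges at $r_1$ and at $r_2$ both see the same pair $\{b_1, b_2\}$, which is the dominant source of overcounting that the factor $1/12$ absorbs.
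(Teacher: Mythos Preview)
Your proposal has a genuine gap at the point you flag as ``the main obstacle,'' and the wedge construction as written will not close it. The arrowhead step is incorrect: if $\Delta r b_i b_{i+1}$ contains several red points, there need not be any $r'$ among them for which the non-convex polygon $r\, b_i\, r'\, b_{i+1}$ (notch at $\overline{b_ib_{i+1}}$) is empty---two reds placed near the edges $\overline{rb_i}$ and $\overline{rb_{i+1}}$ already block every such arrowhead. With one or two reds in the triangle some balanced $4$-hole on $\{r,b_i,b_{i+1},r'\}$ does exist (provided one also allows the notch at $\overline{rb_{i+1}}$), but once three suitably placed reds lie in the triangle, no balanced $4$-hole on these four vertices need exist at all. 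This is exactly the obstruction that must be quantified, and your sketch does not do so; the per-vertex target of $\tfrac{n-4}{6}$ holes at each $r$ is never established and may well fail pointwise. The opening inequality via $\sum_{\{r_1,r_2\}}|E^+|\cdot|E^-|$ is set up and then abandoned, so it contributes nothing.

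The paper's argument is organised differently and supplies the missing count. Each bichromatic edge $\overline{rb}$ is coloured \emph{green} if it is an edge or diagonal of some balanced $4$-hole, \emph{black} if it is a non-green convex-hull edge, and otherwise \emph{red} or \emph{blue} according to the common colour of the four angularly nearest neighbours $T(r,b)$. The key lemma is that a red edge $\overline{rb_i}$ forces at least \emph{three} red points inside the adjacent triangle $\Delta r b_i b_{i+1}$ (and symmetrically in $\Delta r b_{i-1} b_i$); since only $n-1$ other reds are available, at most $\lfloor (n-1)/3\rfloor$ bichromatic edges at $r$ are red, giving at most $n\lfloor(n-1)/3\rfloor$ red edges in total, and symmetrically for blue. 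Subtracting these and at most $2n$ black edges from the $n^2$ bichromatic edges leaves at least $\tfrac{n^2-4n}{3}$ green edges; since each balanced $4$-hole contributes at most four green edges, one divides by $4$ (so $12=3\cdot 4$, not $2\cdot 6$). The ``three points per blocked triangle'' observation is the quantitative idea your sketch is missing.
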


We consider some definitions and preliminary results to 
prove Theorem~\ref{theo:quadratic-4-holes}.
In the rest of this section we will assume that $|R|=|B|=n$.

Given two points $p,q\in S$ with different colors, let $T(p,q)$ be the set of the at most 
four points obtained by taking the first point found in each of the next
four rotations: 
the rotation of $p\rightarrow q$ around $p$ clockwise;
the rotation of $p\rightarrow q$ around $p$ counter-clockwise;
the rotation of $q\rightarrow p$ around $q$ clockwise; and
the rotation of $q\rightarrow p$ around $q$ counter-clockwise.

We classify (or color) the edge $\overline{pq}$
with one of the following four colors: {\em green}, {\em black}, {\em red}, and {\em blue}. 
We color $\overline{pq}$ green if it is an edge, or a diagonal,
of some balanced $4$-hole. 
If $\overline{pq}$ is an
edge of the convex hull of $S$ and is not green, then $\overline{pq}$ is colored black.
If $\overline{pq}$ is neither green nor black,
then all the points in $T(p,q)$ must have the same color and there
are elements of $T(p,q)$ to each side of $\ell(p,q)$. We then
color $\overline{pq}$ with the color of the points in $T(p,q)$.

\begin{figure}[h]
	\centering
	\subfloat[]{
		\includegraphics[scale=0.6,page=1]{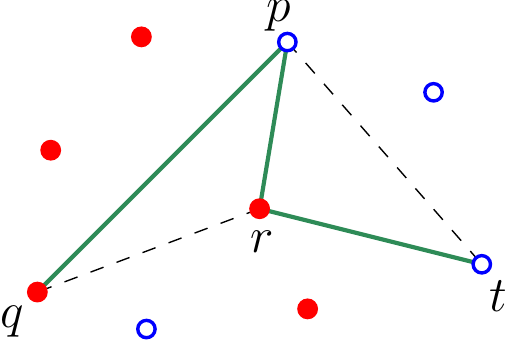}
		\label{fig:green-edge}
	}\hspace{0.5cm}
	\subfloat[]{
		\includegraphics[scale=0.6,page=2]{img/edge-color.pdf}
		\label{fig:black-edge}
	}\hspace{0.5cm}
	\subfloat[]{
		\includegraphics[scale=0.6,page=3]{img/edge-color.pdf}
		\label{fig:red-edge}
	}\hspace{0.5cm}
	\caption{\small{The edge colors: (a) The polygon with vertex set $\{p,q,r,t\}$
	is a balanced $4$-hole, then the edges $\overline{pq}$, $\overline{pr}$, and $\overline{rt}$
	are colored green. (b) Since the edge $\overline{pq}$ is a convex hull edge and there
	is no balanced $4$-hole with edge $\overline{pq}$, then $\overline{pq}$ is colored black.
	(c) Since $\overline{pq}$ is neither red nor black, and the elements of $T(p,q)$ are red,
	$\overline{pq}$ is colored red.}}
	\label{fig:edge-colors}
\end{figure}

\begin{lemma}\label{lem:number-red-blue-edges}
The number of red edges and the number of blue edges are
each at most $n\lfloor\frac{n-1}{3}\rfloor$.
\end{lemma}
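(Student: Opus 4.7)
The plan is to bound, for each red point $p\in R$, the number $d_R(p)$ of red edges incident to $p$ by $\lfloor (n-1)/3\rfloor$; summing over the $n$ red points (each red edge being counted exactly once at its red endpoint) then yields the global bound on red edges, and the statement for blue edges follows by swapping the roles of the two colours.

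Fix $p\in R$ and consider the cyclic angular order around $p$ of the other $2n-1$ points of $S$. If $\overline{pq}$ is red then by the definition of the colouring the two angular neighbours of $q$ around $p$ must themselves be red (otherwise $T(p,q)$ would contain a blue, contradicting the requirement that all points of $T(p,q)$ share the colour of the edge). Thus every red-edge blue at $p$ is a ``singleton blue'' in the angular sequence around $p$. The central claim I would establish is: between any two cyclically consecutive red-edge blues $q_i,q_{i+1}$ in the angular order around $p$, there are at least three red points. Since cyclically the gaps partition the $n-1$ red points around $p$, this forces $3\,d_R(p)\le n-1$, i.e.\ $d_R(p)\le\lfloor(n-1)/3\rfloor$.

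The easy half of the claim (at least two reds per gap) is handled by a direct 4-hole construction. Suppose the next blue $q^*$ after $q_i$ cyclically around $p$ is separated from $q_i$ by a single red $r$. Then by the rotation-first property used to define $T$, both $\triangle pq_ir$ and $\triangle prq^*$ contain no point of $S$ in their interior; the simple polygon on $\{p,q_i,r,q^*\}$ — convex if $r$ lies outside $\triangle pq_iq^*$ and a dart with $r$ as reflex vertex otherwise — has interior equal to $\triangle pq_ir\cup\triangle prq^*$. This polygon is therefore a balanced 4-hole with $\overline{pq_i}$ as an edge, forcing $\overline{pq_i}$ to be green and contradicting the assumption that it is red.

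The harder half of the claim, which I expect to be the main obstacle, is to rule out the case of exactly two reds $r_1,r_2$ between $q_i$ and $q_{i+1}$. Here the rotation-first property makes only the corner triangles $\triangle pq_ir_1$ and $\triangle pr_2q_{i+1}$ automatically empty, while the emptiness of the candidate polygons $\{p,q_i,r_1,q_{i+1}\}$ and $\{p,q_i,r_2,q_{i+1}\}$ hinges on whether $r_2\in\triangle pr_1q_{i+1}$ and $r_1\in\triangle pq_ir_2$ respectively. I would apply the minimum-area function $f$ from Section~\ref{sec:intro} to select an area-minimising red inside a suitable sub-triangle, then combine this with the symmetric constraints forced at $q_{i+1}$ (and, if necessary, look at the analogous candidate polygons on the other side using the previous red-edge blue) to locate an empty balanced 4-hole, possibly non-convex, having $\overline{pq_i}$ or $\overline{pq_{i+1}}$ as an edge or diagonal. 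This case analysis — keeping track of which sub-triangles are empty, on which side of which dividing segments the remaining reds lie, and how these conditions propagate to the next gap — is the technical core of the argument. Once the claim is proved, the cyclic summation around $p$ together with the sum over $p\in R$ completes the proof.
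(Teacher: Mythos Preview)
Your approach is the paper's: fix a red point, look at the angular arrangement around it, and show that each red-edge blue forces at least three nearby reds, so that $3\,d_R(p)\le n-1$. But you stop exactly where the argument is clinched, and the plan you sketch for the two-red case is headed in a harder direction than necessary.

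Two concrete remarks.

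\emph{Work with the next blue, not the next red-edge blue.} The paper sorts only the blue points $b_0,\dots,b_{n-1}$ around the fixed red $r$ and shows: if $\overline{rb_i}$ is red then the wedge from $b_i$ to the very next blue $b_{i+1}$ already contains at least three reds (inside $\Delta rb_ib_{i+1}$). Since these wedges are pairwise disjoint as $b_i$ ranges over the red-edge blues, this immediately gives the bound. Your formulation with consecutive \emph{red-edge} blues $q_i,q_{i+1}$ is weaker and forces you to worry about possible non-red-edge blues sitting in the gap; all of that evaporates once you look only one blue ahead.

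\emph{The two-red case needs no propagation and no symmetry juggling.} Suppose the triangle $\Delta p q_i q^*$ (with $q^*$ the next blue) contains exactly two reds $r_1,r_2$. Fix $r_1$: it subdivides $\Delta p q_i q^*$ into the three triangles $\Delta p q_i r_1$, $\Delta p r_1 q^*$, $\Delta q_i r_1 q^*$, and $r_2$ lies in exactly one of them. Now the three simple quadrilaterals on the vertex set $\{p,q_i,q^*,r_1\}$ are precisely the three darts with reflex vertex $r_1$, and their interiors are $\Delta p q_i q^*$ with one of those three sub-triangles removed. Hence one of the three darts misses $r_2$ and is a balanced $4$-hole; in each of them $\overline{pq_i}$ is an edge or a diagonal, so $\overline{pq_i}$ is green --- contradiction. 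This is the step you were missing: you only examined one dart per $4$-tuple (hence your conditions ``$r_2\in\triangle p r_1 q_{i+1}$'', etc.), when the point is that the three darts together cover every possible placement of the second red. No recourse to $f$, no look at the neighbouring gap, and no induction is needed.
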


\begin{proof}
Let $r\in R$ be any red point. Sort the elements $B$ radially around $r$
in counter-clockwise order, and label them $b_0,b_1,\ldots,b_{n-1}$ in this order.
Subindices are taken modulo $n$.

Suppose that the edge $\overline{rb_i}$ is red, $0\le i<n$,
and the angle needed to rotate the ray $r\rightarrow b_i$
counter-clockwise around $r$ in order to reach $r\rightarrow b_{i+1}$ is less than $\pi$. 
If $\Delta rb_ib_{i+1}$ does not contain
elements of $R$, then there must exist a red point $z$ in 
$\mathcal{W}(rb_ib_{i+1})\setminus \Delta rb_ib_{i+1}$. Then,
the quadrilateral with vertex set $\{r,b_i,z',b_{i+1}\}$ is a balanced $4$-hole, where
$z':=f(b_i,b_{i+1},z,R)$, which contradicts that $\overline{rb_i}$ is red 
(see Figure~\ref{img:3-red-points-at-least-0}). Hence,
$\Delta rb_ib_{i+1}$ must contain red points. 
In fact, $\Delta rb_ib_{i+1}$ contains
at least three red points in order to avoid that $r$, $b_i$, and $b_{i+1}$, joint
with some red point in $\Delta rb_ib_{i+1}$, form a balanced $4$-hole with edge $\overline{rb_i}$
(see Figure~\ref{img:3-red-points-at-least-a} and Figure~\ref{img:3-red-points-at-least-b}).
%
By symmetry, for every red edge $\overline{rb_i}$, if $\overline{rb_{i-1}}$
is reached by rotating $\overline{rb_i}$ clockwise by an angle less than $\pi$, then
the triangle $\Delta rb_{i-1}b_i$ contains at least three red points.
These observations
imply that the number of red edges among 
$\overline{rb_0},\overline{rb_1},\ldots,\overline{rb_{n-1}}$ (i.e.\ the
number of red edges incident to $r$) is at most
$\lfloor\frac{n-1}{3}\rfloor$. Summing over all the red points,
the total number of red edges is at most $n\lfloor\frac{n-1}{3}\rfloor$. 

\begin{figure}[h]
	\centering
	\subfloat[]{
		\includegraphics[scale=0.7,page=4]{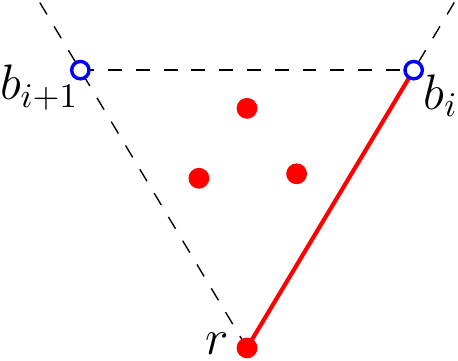}
		\label{img:3-red-points-at-least-0}
	}\hspace{0.5cm}
	\subfloat[]{
		\includegraphics[scale=0.7,page=1]{img/3-red-points-at-least.pdf}
		\label{img:3-red-points-at-least-a}
	}\hspace{0.5cm}
	\subfloat[]{
		\includegraphics[scale=0.7,page=2]{img/3-red-points-at-least.pdf}
		\label{img:3-red-points-at-least-b}
	}
	\caption{\small{(a) If $\mathcal{W}(rb_ib_{i+1})$ contains red points
	and $\Delta rb_ib_{i+1}$ does not, then there exists a balanced $4$-hole with edge $\overline{rb_i}$.
	(b) If the edge $\overline{rb_i}$ is red then the triangle 
	$\Delta rb_ib_{i+1}$ must contain at least three points in order
	to block balanced $4$-holes with vertices $r$, $b_i$, $b_{i+1}$, 
	and some red point of $\Delta rb_ib_{i+1}$, having $\overline{rb_i}$ as edge.
	(c) If $\Delta rb_ib_{i+1}$ contains exactly one or two red points
	then there is a balanced $4$-hole with edge $\overline{rb_i}$.}}
	\label{img:3-red-points-at-least}
\end{figure}

Analogously, 
the total number of blue edges is also at most $n\lfloor\frac{n-1}{3}\rfloor$. 
\end{proof}

\begin{lemma}\label{lem:number-green-edges}
The number of green edges is at least $\frac{n^2-4n}{3}$.
\end{lemma}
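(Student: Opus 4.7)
The plan is a straightforward counting argument based on the color classification of bichromatic edges. There are exactly $|R|\cdot|B| = n^2$ bichromatic edges in $S$, and by construction each one is colored with exactly one of the four colors green, black, red, or blue. So if I can upper bound the number of non-green bichromatic edges, I immediately get a lower bound on the number of green ones.

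First, I would apply Lemma~\ref{lem:number-red-blue-edges} directly to bound
\[
\#\{\text{red edges}\} + \#\{\text{blue edges}\} \;\le\; 2n\left\lfloor \tfrac{n-1}{3} \right\rfloor \;\le\; \tfrac{2n(n-1)}{3}.
\]
Next, I would observe that by definition every black edge lies on the convex hull of $S$. The convex hull has at most $2n$ edges (since $|S|=2n$), so
\[
\#\{\text{black edges}\} \;\le\; 2n.
\]

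Subtracting these two upper bounds from $n^2$ yields
\[
\#\{\text{green edges}\} \;\ge\; n^2 - \tfrac{2n(n-1)}{3} - 2n \;=\; \tfrac{3n^2 - 2n(n-1) - 6n}{3} \;=\; \tfrac{n^2-4n}{3},
\]
which is exactly the claimed bound.

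The only thing to be slightly careful about is making sure the four color classes actually partition the bichromatic edges (so no edge is double-counted and none is missed). This is immediate from the definitions: every bichromatic edge is green or not; if not green and on the convex hull it is black; otherwise the definition of $T(p,q)$ forces the monochromatic configuration that assigns it a red or blue label. So no additional case analysis is needed, and there is no genuine obstacle — the work was done in Lemma~\ref{lem:number-red-blue-edges}.
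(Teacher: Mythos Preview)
Your proof is correct and is essentially identical to the paper's own argument: count the $n^2$ bichromatic edges, subtract the at most $2n\lfloor (n-1)/3\rfloor$ red and blue ones from Lemma~\ref{lem:number-red-blue-edges} and the at most $2n$ black convex-hull edges, and simplify. The only differences are cosmetic---you make the bound $\lfloor (n-1)/3\rfloor \le (n-1)/3$ explicit and add the (correct) remark that the four color classes partition the bichromatic edges.
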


\begin{proof}
There are $n^2$ bichromatic edges in total. By Lemma~\ref{lem:number-red-blue-edges},
at most $n\lfloor\frac{n-1}{3}\rfloor$ of them are red and at most $n\lfloor\frac{n-1}{3}\rfloor$ are blue.
Further observe that at most $2n$ edges are black. Then the number of green edges is at least:
$$n^2-2n\left\lfloor\frac{n-1}{3}\right\rfloor-2n\geq \frac{n^2-4n}{3}.$$
\end{proof}

Observe now that any balanced $4$-hole defines at
most four green edges as polygonal edges or diagonals. Thus,
by Lemma~\ref{lem:number-green-edges},
the number of balanced general
$4$-holes is at least $\frac{1}{4}\left( \frac{n^2-4n}{3}\right)= \frac{n^2-4n}{12}$,
and Theorem~\ref{theo:quadratic-4-holes} thus follows.

\subsection{The separable case}\label{sec:lower-bound-separable}

We now improve our bounds of the previous section
for the case where $R$ and $B$ are linearly separable.
Suppose without loss of generality that there is a horizontal line
$\ell$ such that the elements in
$R$ are above $\ell$, and those in
$B$ are below $\ell$. Further assume that
no two elements in $S=R\cup B$ have the
same $y$-coordinate. 

\begin{lemma}\label{lem:separable:number-red-blue-edges}
If $R$ and $B$ are linearly separable then both the number
of red edges and the number of blue edges are each at most $\frac{n^2-3n+2}{6}$.
\end{lemma}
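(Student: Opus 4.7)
The plan is to refine the counting strategy of Lemma~\ref{lem:number-red-blue-edges} using the structure of the separable case. For each red point $r$, I sort the blue points radially around $r$ as $b_0, b_1, \ldots, b_{n-1}$; since all blue points lie strictly below $\ell$ while $r$ lies strictly above, they span an angular range less than $\pi$ around $r$. Hence only the $n-1$ open triangles $T_i := \Delta r b_i b_{i+1}$, $i = 0, \ldots, n-2$, have angle less than $\pi$ at $r$, and their interiors are pairwise disjoint. Exactly as in Lemma~\ref{lem:number-red-blue-edges}, if $\overline{rb_i}$ is a red edge then each of $T_{i-1}$ (if $i > 0$) and $T_i$ (if $i < n-1$) contains at least three red points; I call such triangles heavy.

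The new ingredient specific to the separable case is a geometric observation: any point $p$ of $T_i$ is a strict convex combination of $r, b_i, b_{i+1}$, and since $y(b_i), y(b_{i+1}) < y(r)$, it follows that $y(p) < y(r)$. Hence every red point inside $\bigcup_i T_i$ lies strictly below $r$ in $y$-coordinate. Writing $h(r)$ for the number of red points with $y$-coordinate smaller than $y(r)$, the heavy triangles together contain at most $h(r)$ red points. A short injective-map argument on the positions $\{i_1 < \cdots < i_k\}$ of red edges at $r$ shows that the number of distinct heavy triangles is at least $k$: sending $i_m \mapsto T_{i_m}$ is injective whenever $n-1 \notin \{i_m\}$, and otherwise one shifts each position $i_m$ in the maximal run of red edges ending at $n-1$ to $T_{i_m - 1}$ (which is well-defined since $k=n$ would force $h(r) \geq 3(n-1) > n-1$, a contradiction). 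Combining with the three-red-point lower bound in each heavy triangle yields the local bound $\#\{\text{red edges at } r\} \leq \lfloor h(r)/3 \rfloor$.

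To finish, I order the red points by decreasing $y$-coordinate as $r_1, \ldots, r_n$, so that $h(r_k) \leq n - k$. Summing the local bounds gives a total of at most $\sum_{k=1}^{n} \lfloor (n-k)/3 \rfloor = \sum_{j=0}^{n-1} \lfloor j/3 \rfloor$ red edges, and a direct computation bounds this sum by $(n-1)(n-2)/6 = (n^2 - 3n + 2)/6$. The symmetric argument---sorting red points radially around each blue point $b$ and using that every point of the resulting fan lies strictly above $b$ in $y$-coordinate---gives the same bound for blue edges. The main obstacle is the combinatorial step bounding the heavy-triangle count from below by $k$; the extremal case of a consecutive run of red edges ending at position $n-1$ is what necessitates the shift of the injective map described above.
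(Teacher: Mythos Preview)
Your proof is correct and follows essentially the same approach as the paper's: order the red points by $y$-coordinate, observe that the fan of $n-1$ triangles at each $r$ is disjoint and contains only red points strictly below $r$, deduce the local bound $\lfloor h(r)/3\rfloor$ on red edges at $r$, and sum $\sum_{j=0}^{n-1}\lfloor j/3\rfloor\le (n-1)(n-2)/6$. The paper simply asserts the local bound by analogy with Lemma~\ref{lem:number-red-blue-edges}, whereas you spell out the injection from red-edge positions to heavy triangles (including the shift for a terminal run at $b_{n-1}$); this extra care is welcome but does not constitute a different method.
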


\begin{proof}
Label the red points $r_0,r_1,\ldots,r_{n-1}$ in the ascending order of the $y$-coordinates.
Let $r_i$ be any red point, $0\leq i<n$. 
Sort the blue points radially around $r_i$
in counter-clockwise order and label them $b_0,b_1,\ldots,b_{n-1}$ in this order.
Similarly as in the proof of Lemma~\ref{lem:number-red-blue-edges}, 
if $\overline{r_ib_j}$ is red, $0\le j< n$, then among $r_0,r_1,\ldots,r_{i-1}$
the triangle $\Delta r_ib_jb_{j-1}$ contains at least three elements if $j>0$, 
and the triangle $\Delta r_ib_jb_{j+1}$ contains at least three elements if $j<n-1$.
%
%
Then the number of red edges incident to $r_i$ is at most
$\lfloor\frac{i}{3}\rfloor$, and over all the red points, the number
of red edges is at most
$$\sum_{i=0}^{n-1}\left\lfloor\frac{i}{3}\right\rfloor$$
If $n-1=3k$, for some integer $k$, then:
\begin{equation*}
\sum_{i=0}^{n-1}\left\lfloor\frac{i}{3}\right\rfloor =3\left(0+1+\ldots+(k-1)\right)+k
= \frac{n^2-3n+2}{6}.
\end{equation*}

If $n-1=3k+1$, then:
\begin{equation*}
\sum_{i=0}^{n-1}\left\lfloor\frac{i}{3}\right\rfloor = 3\left(0+1+\ldots+(k-1)\right)+2k
= \frac{n^2-3n+2}{6}.
\end{equation*}

Finally, if $n-1=3k+2$, then:
\begin{equation*}
\sum_{i=0}^{n-1}\left\lfloor\frac{i}{3}\right\rfloor  =  3\left(0+1+\ldots+k\right)
= \frac{n^2-3n}{6}.
\end{equation*}

Therefore, we have that the number of red edges is at most $\frac{n^2-3n+2}{6}$.
Analogously, there are at most $\frac{n^2-3n+2}{6}$ blue
edges in total.
\end{proof}

\begin{theorem}\label{thm:separable-4-holes}
If $R$ and $B$ are linearly separable then 
the number of balanced $4$-holes is at least $\frac{2n^2+3n-8}{12}$.
\end{theorem}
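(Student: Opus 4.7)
My plan is to mimic the proof of Theorem~\ref{theo:quadratic-4-holes} in the general (non-separable) case, using the same edge coloring scheme (green/black/red/blue), but sharpening two of the three ingredients in the count of bichromatic edges. The sharpened red/blue bound is already provided by Lemma~\ref{lem:separable:number-red-blue-edges}; what I need in addition is a sharper bound on the number of \emph{black} edges, i.e.\ bichromatic convex hull edges that are not green. This is where linear separability pays off: since $R$ lies entirely above the separating line $\ell$ and $B$ entirely below, walking around the boundary of $CH(S)$ we see a contiguous arc of red points followed by a contiguous arc of blue points, so $CH(S)$ contains exactly two bichromatic edges. In particular, at most $2$ edges of the coloring are black.

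With these ingredients in hand the computation proceeds exactly as in the proof of Lemma~\ref{lem:number-green-edges}. There are $n^2$ bichromatic edges in total; by Lemma~\ref{lem:separable:number-red-blue-edges} at most $\frac{n^2-3n+2}{6}$ are red and at most $\frac{n^2-3n+2}{6}$ are blue, and by the observation above at most $2$ are black. Hence the number of green edges is at least
\[
n^2 - 2\cdot\frac{n^2-3n+2}{6} - 2 \;=\; \frac{3n^2 - (n^2-3n+2) - 6}{3} \;=\; \frac{2n^2+3n-8}{3}.
\]
Since every balanced $4$-hole contributes at most four green edges (its four sides and its two diagonals, of which only sides plus the at most two bichromatic diagonals are green, giving at most $4$ distinct edges per hole, exactly as in the argument just after Lemma~\ref{lem:number-green-edges}), the number of balanced $4$-holes is at least $\frac{1}{4}\cdot\frac{2n^2+3n-8}{3} = \frac{2n^2+3n-8}{12}$.

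The only real work is the two auxiliary observations. The bound on red/blue edges is already established, so the sole step I need to justify carefully is the one about black edges. It is essentially folklore for linearly separable bichromatic sets, but I would spell it out: order the points on $CH(S)$ cyclically, and use the fact that each point of $R$ (resp.\ $B$) on the hull boundary lies on the same side of $\ell$ as the rest of $R$ (resp.\ $B$), so the hull boundary crosses $\ell$ in exactly two segments, each of which corresponds to a single bichromatic hull edge. This yields exactly two bichromatic convex hull edges, and the theorem follows.

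The main obstacle, if any, is making sure the accounting per $4$-hole (at most four green edges) is correctly reused in the separable setting; since this bookkeeping is identical to the one in the general case and has nothing to do with separability, I expect no difficulty there, and the proof reduces to substituting the two improved bounds into the same arithmetic.
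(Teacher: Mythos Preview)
Your proposal is correct and follows essentially the same route as the paper's own proof: bound the black edges by $2$ using linear separability, invoke Lemma~\ref{lem:separable:number-red-blue-edges} for the red/blue edge bounds, subtract from $n^2$ to get at least $\frac{2n^2+3n-8}{3}$ green edges, and divide by four. If anything, you justify the ``at most two black edges'' claim more explicitly than the paper does.
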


\begin{proof}
Since $R$ and $B$ are linear separable, the number of black edges
is at most $2$. Using Lemma~\ref{lem:separable:number-red-blue-edges},
we can ensure that the number of green edges is at least
\begin{eqnarray*}
n^2-2\left(\frac{n^2-3n+2}{6}\right)-2 & = & \frac{2n^2+3n-8}{3}.
\end{eqnarray*}
Then the number of balanced $4$-holes is at least $\frac{2n^2+3n-8}{12}=\frac{2n^2+3n-8}{12}$.
\end{proof}

We observe that our lower bounds are asymptotically tight for point sets $S=R\cup B$ with
$|R|=|B|=n$. For example, if $R$ and $B$ are far enough from each other (i.e.\ 
any line passing through two points of $R$
does not intersect $CH(B)$, and vice versa), $R$ is a concave chain,
and $B$ a convex chain (see Figure~\ref{img:example-quadratic-4-holes}),
then the number of balanced $4$-holes is precisely $(n-1)\times (n-1)$;
each of them convex and formed by two consecutive red points and two consecutive blue points.
This point set $R\cup B$ (without the colors) was called the {\em double chain}~\cite{garciaNT00}. 

\begin{figure}[h]
	\centering
	\includegraphics[scale=0.5]{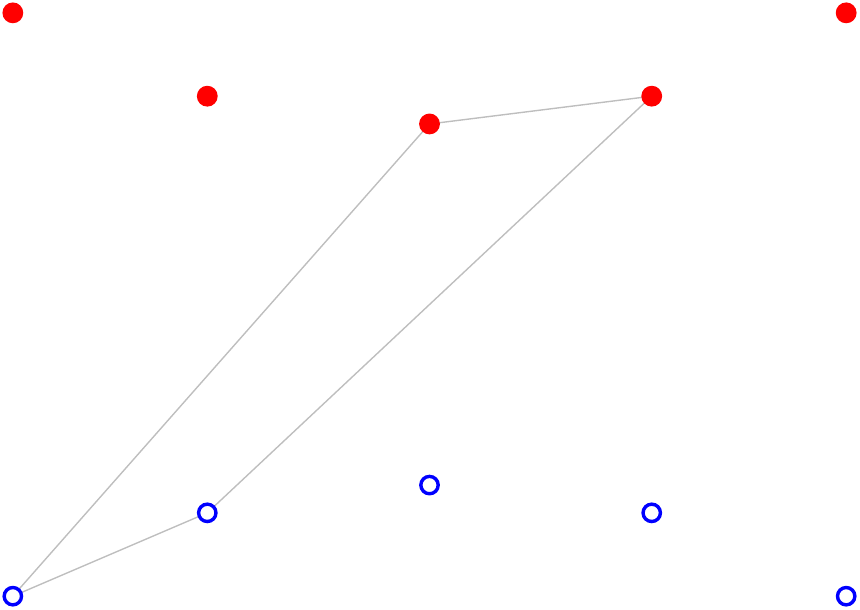}
	\caption{\small{An example of $2n$ points having exactly $(n-1)^2$ balanced $4$-holes.}}
	\label{img:example-quadratic-4-holes}
\end{figure}

\section{Balanced convex $4$-holes}\label{sec:characterization}

In this section we characterize bicolored point sets $S=R \cup B$
that contain balanced \emph{convex} $4$-holes.
To start with, we point out that in general 
$S=R \cup B$ does not necessarily have balanced convex $4$-holes.
The point sets shown in Figure~\ref{fig:no-convex-4-hole}
do not have balanced convex $4$-holes.
Observe that the number of blue points in the interior of the convex hull of the blue points
in Figure~\ref{fig:no-convex-4-hole-b}
and Figure~\ref{fig:no-convex-4-hole-c} can be arbitrarily large.
A more general example with eight points, 4 red and 4 blue
linearly separable, is shown in
Figure~\ref{fig:no-convex-4-hole-d}, which can
be generalized to point sets with $2n$ points, $n \geq 2$,
$n$ red and $n$ blue. 

\begin{figure}[h]
	\centering
	\subfloat[]{
		\includegraphics[scale=0.5]{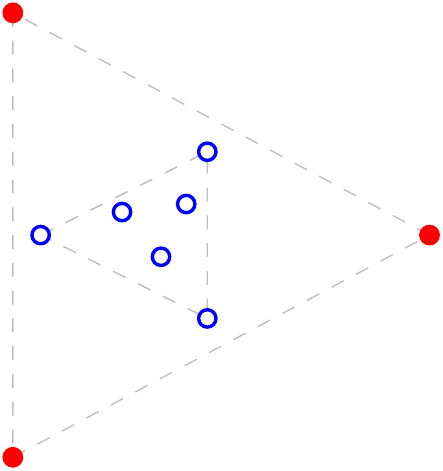}
		\label{fig:no-convex-4-hole-b}
	}\hspace{0.5cm}
	\subfloat[]{
		\includegraphics[scale=0.5]{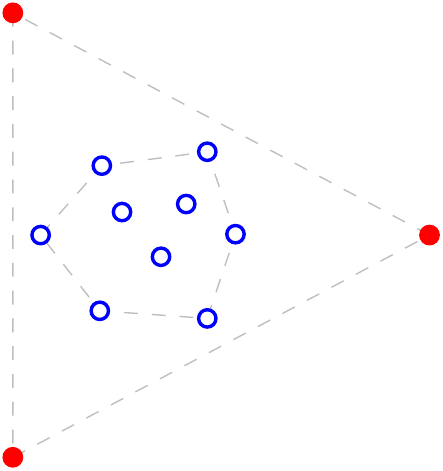}
		\label{fig:no-convex-4-hole-c}
	}\\
	\subfloat[]{
		\includegraphics[scale=0.5]{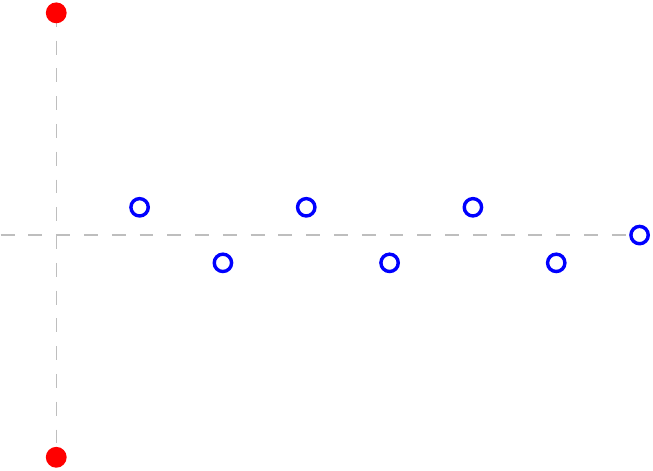}
		\label{fig:no-convex-4-hole-a}
	}\hspace{0.5cm}
	\subfloat[]{
		\includegraphics[scale=0.45]{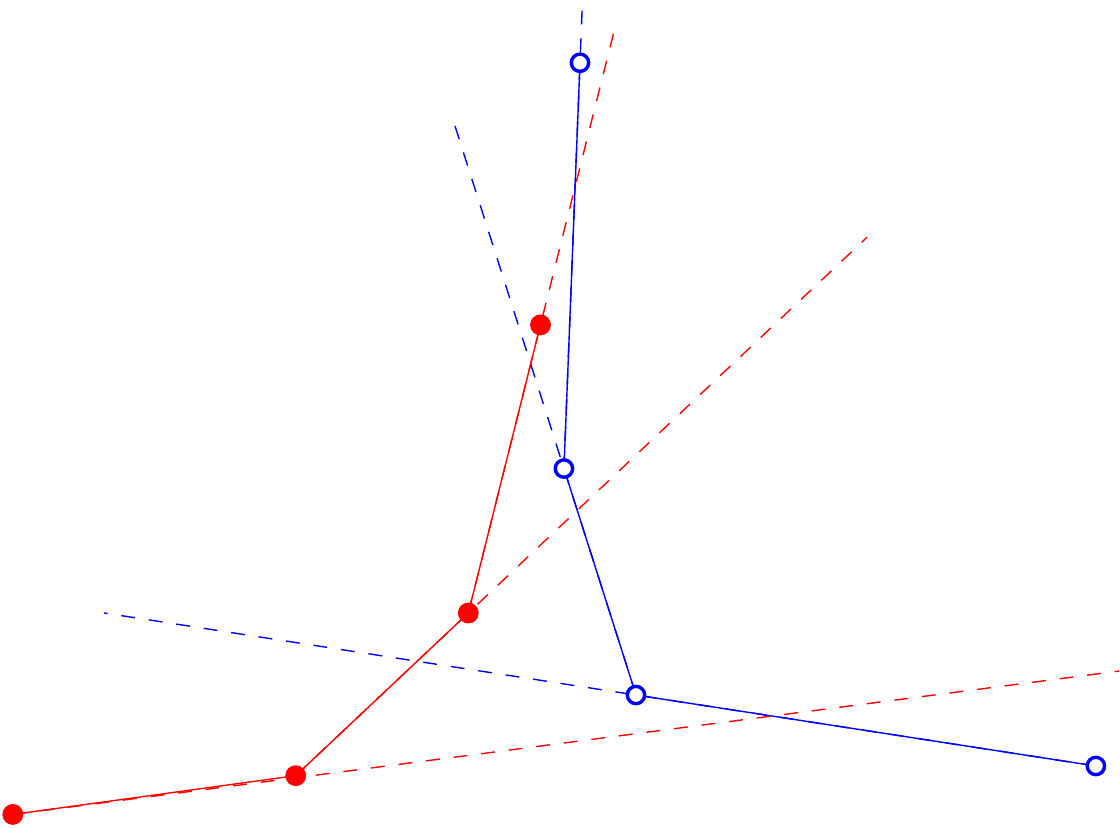}
		\label{fig:no-convex-4-hole-d}
	}
	\caption{\small{Some point sets with no balanced convex $4$-holes.}}
	\label{fig:no-convex-4-hole}
\end{figure}

Let $p,q\in S$ be two points of the same color. If $p$ and $q$ are red, $\overline{pq}$
will be called a \emph{red-red edge}.
Otherwise, if $p$ and $q$ are blue, we call it a \emph{blue-blue edge}.

\subsection{$R$ and $B$ are not linearly separable}\label{sec:convex-4-holes:no-separability}

We proceed now to characterize bicolored point sets $S=R\cup B$, not
linearly separable, which contain
balanced convex $4$-holes. We assume $|R|, |B| \geq 2$.

\begin{lemma}\label{lem:crossing-edges}
If $S$ contains a red-red edge and a blue-blue edge that intersect each other,
then $S$ contains a balanced convex 4-hole.
\end{lemma}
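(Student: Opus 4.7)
The plan is to use a minimality argument. Let $\overline{r_1r_2}$ and $\overline{b_1b_2}$ be a red-red edge and a blue-blue edge that cross. Their crossing guarantees that the four points are in convex position with alternating colors $r_1, b_1, r_2, b_2$ around $CH(\{r_1,r_2,b_1,b_2\})$, and the two edges are the diagonals of this convex quadrilateral. Hence if the interior of this quadrilateral is empty of points of $S$, we already have a balanced convex $4$-hole and we are done.

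To handle the general case, among all pairs $(e_R, e_B)$ where $e_R$ is a red-red edge, $e_B$ is a blue-blue edge, and $e_R$ crosses $e_B$, pick one that minimizes the area of $CH(\{r_1,r_2,b_1,b_2\})$ (such a pair exists by hypothesis). I claim the interior of this convex quadrilateral is empty. Suppose, for a contradiction, that some $p \in S$ lies in its interior. Let $x := \overline{r_1r_2} \cap \overline{b_1b_2}$; the two diagonals partition the quadrilateral into the four open triangles $\Delta r_1xb_1$, $\Delta b_1xr_2$, $\Delta r_2xb_2$, $\Delta b_2xr_1$, and $p$ lies in one of them. By symmetry assume $p \in \Delta r_1xb_1$.

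If $p$ is red, consider the red-red edge $\overline{pr_2}$. Since $p$ lies strictly on the $r_1$-side of $\ell(b_1,b_2)$ while $r_2$ lies on the opposite side, $\overline{pr_2}$ crosses $\overline{b_1b_2}$. The new quadrilateral $CH(\{p,r_2,b_1,b_2\})$ is contained in $CH(\{r_1,r_2,b_1,b_2\})$ and misses the vertex $r_1$, so its area is strictly smaller, contradicting minimality. If $p$ is blue, then $\overline{pb_2}$ is a blue-blue edge that crosses $\overline{r_1r_2}$ (because $p$ and $b_2$ are separated by $\ell(r_1,r_2)$, as $p \in \Delta r_1xb_1$ lies on the $b_1$-side of this line while $b_2$ lies on the other side), and the same argument yields a strictly smaller crossing pair $(\overline{r_1r_2}, \overline{pb_2})$, again a contradiction.

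Hence the minimal quadrilateral has empty interior, and it is the required balanced convex $4$-hole. The only delicate point is verifying in each subcase that the replacement produces a genuinely crossing pair, which is immediate from the position of $p$ inside the chosen subtriangle relative to the two diagonal lines.
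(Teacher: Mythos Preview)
Your argument is correct and is essentially the paper's own proof: both pick a crossing red-red/blue-blue pair whose convex quadrilateral has minimum area, and both derive a contradiction from an interior point by replacing one vertex to get a strictly smaller crossing pair. Your version is only cosmetically different in that you first localize $p$ in one of the four sub-triangles cut out by the diagonals, whereas the paper argues directly that one of the two segments from $p$ to a same-colored vertex must cross the opposite-colored diagonal; the underlying idea and the minimality invariant are identical.
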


\begin{proof}
Choose a red-red edge $\overline{ab}$ and a blue-blue edge $\overline{cd}$
such that $\overline{ab}\cap \overline{cd}\neq \emptyset$ 
and the convex quadrilateral $Q$ with vertex set $\{a,b,c,d\}$ is 
of minimum area among all possible convex quadrilaterals having a red-red
diagonal and a blue-blue diagonal.
Observe that $Q$ is balanced and assume that $Q$ is not a $4$-hole. 
Then $Q$ contains a point of $S$ in its interior. 
Suppose w.l.o.g.\ that there is a red point
$e$ in the interior of $Q$. Then we have that $\overline{ea}$ 
intersects $\overline{cd}$, or $\overline{eb}$ intersects $\overline{cd}$.
Suppose w.l.o.g.\ the former case. Hence,
$\{a,e,c,d\}$ is the vertex set of 
a balanced convex quadrilateral with a red-red diagonal and a blue-blue diagonal 
with area smaller than that of $Q$,
a contradiction.
\end{proof}

\begin{lemma}\label{lem:crossing-ch-boundaries}
If the boundaries of $CH(R)$ and $CH(B)$ intersect each other,
then $S$ contains a balanced convex 4-hole.
\end{lemma}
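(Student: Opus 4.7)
The plan is to reduce this to the preceding Lemma~\ref{lem:crossing-edges}, which already delivers a balanced convex $4$-hole whenever we can exhibit a red-red edge and a blue-blue edge that cross. Since every edge of $CH(R)$ is a red-red edge of $S$ and every edge of $CH(B)$ is a blue-blue edge of $S$, it suffices to show that, under the hypothesis, some edge of $CH(R)$ and some edge of $CH(B)$ intersect.

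First I would observe that the boundary of $CH(R)$ is the union of its edges (closed segments between consecutive red vertices of the hull), and likewise for $CH(B)$. If the two boundaries meet, then some edge $e_R$ of $CH(R)$ meets some edge $e_B$ of $CH(B)$. I would then rule out degenerate meetings using general position: a vertex of $CH(R)$ is a red point, a vertex of $CH(B)$ is a blue point, so no vertex of one hull can coincide with a vertex of the other; moreover no three points of $S$ are collinear, so a red vertex cannot lie in the relative interior of a blue-blue edge and vice versa. Hence $e_R$ and $e_B$ cannot share an endpoint and cannot meet at a relative-interior/vertex incidence, so their intersection occurs in the relative interior of both, i.e.\ $e_R$ and $e_B$ cross transversally.

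Having produced a red-red edge and a blue-blue edge that intersect, I would invoke Lemma~\ref{lem:crossing-edges} to conclude that $S$ contains a balanced convex $4$-hole. I expect the only point requiring care is the general-position argument in the previous paragraph; everything else is immediate from the fact that hull edges are bichromatic-free segments of the appropriate color.
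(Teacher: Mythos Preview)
Your proposal is correct and follows exactly the paper's approach: reduce to Lemma~\ref{lem:crossing-edges} by noting that intersecting hull boundaries force some edge of $CH(R)$ (a red-red edge) to cross some edge of $CH(B)$ (a blue-blue edge). The paper states this in one sentence without spelling out the general-position argument you added, so your version is if anything slightly more detailed than the original.
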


\begin{proof}
Observe that there exist a red-red edge and a blue-blue edge that intersect each other.
Therefore, the result follows from Lemma~\ref{lem:crossing-edges}.
\end{proof}

\begin{lemma}\label{lem:ch-contains-other-R=3}
Let $S=R\cup B$ be a bichromatic point set such that $R$ and $B$ are not
linearly separable, $CH(B)\subset CH(R)$, $|R|=3$, and $|B|\ge 2$. Then
$S$ contains a balanced convex $4$-hole if and only if there is a blue-blue
edge $\overline{uv}$ of $CH(B)$ such that one of the open half-planes bounded
by $\ell(u,v)$ contains exactly 2 red points and no blue point.
\end{lemma}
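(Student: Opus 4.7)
Setup: Since $|R|=3$ and $CH(B)\subset CH(R)$, the three red points $r_1,r_2,r_3$ are the vertices of a triangle strictly containing all blue points. I will prove the two directions separately.

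For the "if" direction, let $\overline{uv}$ be a blue-blue edge of $CH(B)$ such that one open half-plane $H$ bounded by $\ell(u,v)$ contains exactly two reds, say $r_1, r_2$, and no blue point. I claim $\{r_1,r_2,u,v\}$ is the vertex set of a balanced convex 4-hole $Q$. For convexity: $u,v$ lie on $\ell(u,v)$ while $r_1,r_2$ lie strictly in $H$, and $u,v$ lie strictly on one side of $\ell(r_1,r_2)$ (on the $r_3$ side, by $CH(B)\subset CH(R)$). These two facts make both $\ell(u,v)$ and $\ell(r_1,r_2)$ supporting lines of $\{r_1,r_2,u,v\}$, so all four points are vertices of their convex hull, which is a convex quadrilateral with $\overline{uv}$ and $\overline{r_1r_2}$ as a pair of opposite sides. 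For emptiness: any interior point of $Q$ has strictly positive barycentric weight on $r_1$ or $r_2$, hence lies strictly in $H$; by hypothesis $H$ contains no point of $S\setminus\{r_1,r_2\}$, so $Q$ is empty.

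For the "only if" direction, suppose $Q$ is a balanced convex 4-hole with red vertices $r_i,r_j$ and blue vertices $b,b'$. First I argue that $r_i,r_j$ must be adjacent in $Q$: otherwise the diagonals $\overline{r_ir_j},\overline{bb'}$ of $Q$ cross, forcing $b,b'$ onto opposite sides of $\ell(r_i,r_j)$, whereas $CH(B)\subset CH(R)$ puts all blues on the side of $\ell(r_i,r_j)$ containing the third red $r_k$, a contradiction. So the cyclic order of $Q$ is (up to relabeling) $r_i,r_j,b',b$, and $r_i,r_j$ lie strictly on one side of $\ell(b,b')$. Since $\ell(b,b')$ passes through the interior of $\Delta r_ir_jr_k$ and cannot cross $\overline{r_ir_j}$, it must cross both $\overline{r_ir_k}$ and $\overline{r_jr_k}$, which places $r_k$ strictly on the opposite side of $\ell(b,b')$ from $r_i,r_j$.

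To exhibit the required edge, I apply a minimality argument: among all balanced convex 4-holes with red vertices $r_i,r_j$, pick one $Q^*=r_ir_jb'^*b^*$ of minimum area. The plan is to show that $\overline{b^*b'^*}$ is an edge of $CH(B)$; once this is established, $\overline{uv}:=\overline{b^*b'^*}$ satisfies the lemma, because the open half-plane on the $r_ir_j$-side of $\ell(b^*,b'^*)$ contains exactly $r_i,r_j$ (no other blue is there by the edge-of-hull property, and $r_k$ is on the other side by the preceding paragraph). To prove the claim, assume some blue $b_1\notin\{b^*,b'^*\}$ lies strictly on the $r_ir_j$-side of $\ell(b^*,b'^*)$. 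By the empty-interior property of $Q^*$, $b_1$ must lie in one of the two "wings" of $Q^*$ — the triangular regions exterior to $Q^*$ bounded by a slanted edge of $Q^*$, a portion of $\ell(b^*,b'^*)$, and a portion of the triangle boundary. Using $b_1$, I would then build a new balanced convex 4-hole of strictly smaller area — either $r_ir_jb'^*b_1$ (replacing $b^*$) or $r_ir_jb_1b^*$ (replacing $b'^*$) — contradicting the choice of $Q^*$.

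The main obstacle is this final swap. I will need to verify that for at least one of the two candidate quadrilaterals, the result is convex, has empty interior (in particular the discarded vertex $b^*$ or $b'^*$ lies outside the new quadrilateral and no other point of $S$ is engulfed), and has strictly smaller area than $Q^*$. This calls for a case analysis based on which wing contains $b_1$ and on the slope of $\ell(b^*,b'^*)$ relative to $\ell(r_i,r_j)$; the expected outcome is that in every configuration, exactly one of the two swaps is always valid and yields a strictly smaller 4-hole, closing the contradiction.
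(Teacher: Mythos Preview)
Your ``if'' direction and the opening of the ``only if'' direction (showing the two red vertices must be adjacent in any balanced convex 4-hole, and that the third red $r_k$ falls on the opposite side of $\ell(b,b')$) are correct and in fact more carefully argued than in the paper. The gap is exactly where you flag it: the final swap based on \emph{area} minimality does not close.

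Concretely, suppose $b_1$ lies in the wing $\Delta r_i b^* p$ adjacent to the edge $\overline{r_ib^*}$. Replacing $b^*$ by $b_1$ changes the area of $Q^*$ by the difference of the areas of $\Delta r_i b'^* b_1$ and $\Delta r_i b'^* b^*$, i.e.\ by the difference of the distances of $b_1$ and $b^*$ to the line $\ell(r_i,b'^*)$. But every point of the wing lies on the $b^*$-side of $\ell(r_i,b'^*)$, and the wing extends \emph{beyond} $b^*$ along $\ell(b^*,b'^*)$ (toward $p$), so points of the wing can be strictly farther from $\ell(r_i,b'^*)$ than $b^*$ is; then the area goes \emph{up}. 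The alternative swap you mention---keeping $b^*$ and using $b_1$ in place of $b'^*$---produces a quadrilateral whose interior contains the triangle $\Delta r_i b^* b_1$, which lies in the wing and may contain further blue points; so emptiness is not guaranteed. Hence neither candidate obviously yields a smaller balanced convex 4-hole, and the ``expected outcome'' you describe does not follow from area minimality alone.

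The paper avoids this by minimising a different quantity: among balanced convex 4-holes on the same two reds $a,b$, it picks blue vertices $u,v$ minimising $d(u,\overline{ac})+d(v,\overline{bc})$ (with $c$ the third red). The wing at $u$ is then $\Delta a u p$ with $a,p\in\overline{ac}$, so any blue $b_1$ in that wing is \emph{strictly closer} to $\overline{ac}$ than $u$; replacing $u$ by $b_1$ strictly decreases the chosen functional, giving the contradiction directly. Switching your minimisation criterion to this distance sum (or, equivalently, choosing $b_1$ in the wing with minimal distance to the relevant triangle side and arguing emptiness from that) is the missing idea.
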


\begin{proof}
Let $a,b,c$ denote the three elements of $R$. Suppose that there exists an edge 
$\overline{uv}$ of $CH(B)$ such that $a$ and $b$ belong to one of the
two open half-planes
bounded by $\ell(u,v)$ and that the elements of $S\setminus \{a,b,u,v\}$ belong to the other open
half-plane (see Figure~\ref{fig:blue-inside-3-red-a}). 
Then the quadrilateral with vertex set $\{a,b,u,v\}$ is a balanced
convex $4$-hole.

\begin{figure}[h]
	\centering
		\includegraphics[scale=0.6]{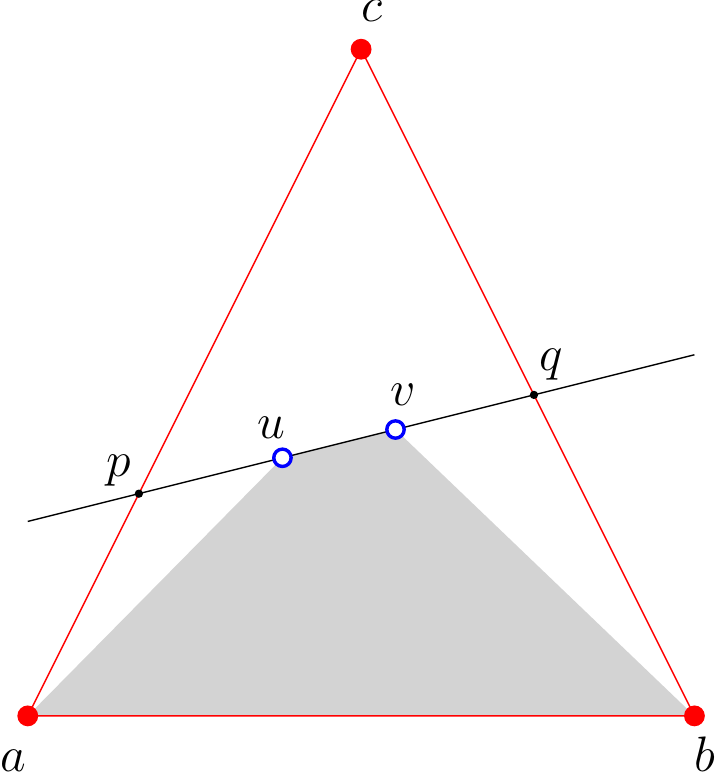}
	\caption{\small{Proof of Lemma~\ref{lem:ch-contains-other-R=3}.}}
	\label{fig:blue-inside-3-red-a}
\end{figure}

Suppose now that $S$ has a balanced convex $4$-hole. Assume w.l.o.g.\ that this
$4$-hole has vertex set $\{a,b,u,v\}$, where $u\rightarrow v$ intersects
$\overline{bc}$ at the point $q$, and $v\rightarrow u$ intersects $ac$ at
the point $p$ (see Figure~\ref{fig:blue-inside-3-red-a}).
We select the points $u$ and $v$ so that the sum of the distance of $u$ from the segment
$\overline{ac}$ and the distance of $v$ from the segment $\overline{bc}$ is minimized.
Observe from the choice of $u$ and $v$ that $\Delta aup\cup \Delta bvq$
does not contain blue points. Hence, the result follows.
%
\end{proof}

\begin{lemma}\label{lem:ch-contains-other-R>=4,B>=2}
Let $S=R\cup B$ be a bicolored point set such that $R$ and $B$ are not
linearly separable, $CH(B)\subset CH(R)$, $|R|\ge 4$, and $|B|\ge 2$.
Then $S$ has a balanced convex $4$-hole.
\end{lemma}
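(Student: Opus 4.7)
My plan is to begin by applying Lemma~\ref{lem:crossing-edges}: if any red-red edge crosses any blue-blue edge, then $S$ already contains a balanced convex $4$-hole and we are done. So I may assume no such crossing occurs. A short observation then shows that every red point lies outside $CH(B)$: since $CH(B)\subsetneq CH(R)$, some red is outside $CH(B)$, and any red inside $CH(B)$ together with it would produce a red-red segment piercing $\partial CH(B)$, which is a union of blue-blue edges, violating the no-crossing assumption.

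Next, I want to find an edge $\overline{uv}$ of $CH(B)$ whose clear open half-plane $H^-$ (the one not containing $B\setminus\{u,v\}$) contains at least two red points. Because $CH(B)$ is strictly interior to $CH(R)$, the line $\ell(u,v)$ cuts the interior of $CH(R)$, so $H^-$ always contains at least one red. When $|B|=2$, the two open half-planes bounded by $\ell(b_1,b_2)$ between them contain all $|R|\ge 4$ reds, forcing one side to have at least two. When $|B|\ge 3$, a double counting $\sum_{e}|R\cap H^-(e)|=\sum_{r}\mathrm{vis}(r)\ge|R|\ge 4$ over the edges $e$ of $CH(B)$ exhibits such an edge whenever the number of edges is small enough; the complementary case, in which each edge of $CH(B)$ sees exactly one red on its clear side, is handled by a separate construction using two adjacent edges $\overline{b_1b_2}$ and $\overline{b_2b_3}$, the (possibly distinct) red points on their clear sides, and one of the non-shared blue vertices to assemble a balanced convex $4$-hole at the shared vertex $b_2$.

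Once a good $\overline{uv}$ is fixed, I let $c\in R\cap H^-$ minimize the area of $\Delta uvc$; by this minimality $\Delta uvc$ contains no red in its interior, and $H^-$ contains no blue, so $\Delta uvc$ is empty. Among the red points $q\in(R\cap H^-)\setminus\{c\}$ for which $\{u,v,c,q\}$ is in convex position, I choose $q$ so that $Q=\mathrm{conv}\{u,v,c,q\}$ has minimum area. This $Q$ is the desired balanced convex $4$-hole: any point $p\in S$ in its interior must be a red point of $H^-$ with $p\notin\Delta uvc$ (by the choice of $c$), so $p$ lies in the complementary triangle of the natural triangulation of $Q$ along the diagonal through $c$; replacing $c$ or $q$ by $p$ then yields a convex quadrilateral $\{u,v,c,p\}$ or $\{u,v,p,q\}$ of strictly smaller area, contradicting the minimality of $Q$.

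The main obstacle will be verifying this final replacement step, since it is not a priori clear that at least one of the two candidate replacements produces a convex quadrilateral: the replaced vertex could, in principle, end up inside the new triangle. The argument must exploit the fact that $c$ is the closest red to $\ell(u,v)$ to rule this out. The existence argument for $\overline{uv}$ when $|B|\ge 3$ and every edge of $CH(B)$ sees only one red on its clear side also requires the direct construction mentioned above rather than the area-minimization mechanism.
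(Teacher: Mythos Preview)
Your approach diverges substantially from the paper's (which triangulates $R$, confines $B$ to a single red triangle, and then does a four-case analysis on the location of the fourth red vertex $d$ of an adjacent triangle). Unfortunately, your outline has a genuine gap that is not where you think it is.

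The emptiness argument you flag as ``the main obstacle'' is actually fine: once $\{u,v,c,q\}$ is convex with, say, cyclic order $u,c,q,v$, any interior red $p$ lies in $\Delta cqv$; since the edge $\overline{uc}$ of the quadrilateral has $q$ and $v$ on its interior side, the whole triangle $\Delta cqv$ lies on the $v$-side of $\ell(u,c)$, whereas the wedge $W_c=\{p':c\in\Delta uvp'\}$ lies on the opposite side. Hence $\{u,v,c,p\}$ is convex and strictly smaller, contradicting minimality over $q$. (Your second replacement $\{u,v,p,q\}$ is irrelevant, since you minimized over $q$ with $c$ fixed.)

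The real problem is the \emph{existence} of such a $q$. You need some red $q\in H^-\setminus\{c\}$ with $c\notin\Delta uvq$, i.e.\ $q\notin W_c$. Nothing you have assumed prevents every other red in $H^-$ from lying in $W_c$. Concretely, take $B=\{(0,0),(4,0),(2,-3)\}$ and $R=\{(2,0.5),(2.01,5),(-20.5,-5),(24.5,-5)\}$. One checks that $CH(B)\subset CH(R)$, no red-red segment crosses a blue-blue segment, and the only edge of $CH(B)$ whose clear half-plane contains two reds is $\overline{uv}$ with $u=(0,0)$, $v=(4,0)$. There $c=(2,0.5)$ is the closest red, and the single other red $(2.01,5)$ in $H^-$ satisfies $c\in\Delta uv(2.01,5)$, so no admissible $q$ exists and your area-minimization never gets started. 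A balanced convex $4$-hole does exist here (the lemma is true), but your mechanism does not find it.

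Separately, the ``separate construction'' for the case in which every edge of $CH(B)$ sees exactly one red on its clear side is only a sentence, not an argument; with $|R|\ge 4$ and an $m$-gon $CH(B)$ this case genuinely occurs, and assembling a balanced convex empty quadrilateral from two adjacent edges and their lone visible reds requires a careful convexity/emptiness check you have not supplied.
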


\begin{proof}
Let $\mathcal{T}$ be a triangulation of $R$. If there are two blue points that belong
to different triangles of $\mathcal{T}$, then there exist a red-red edge and a blue-blue
edge intersecting each other, and the result thus follows from Lemma~\ref{lem:crossing-edges}.
Suppose then that $B$ is completely contained in a single triangle $t$ of $\mathcal{T}$, with vertices
$a,b,c\in R$ in counter-clockwise order. 

If $|B|=2$, there exists an edge of $t$ which is not intersected by the line
through the two blue points. Then the two red points of that edge, joint with the two
blue points, form a balanced convex $4$-hole (Lemma~\ref{lem:ch-contains-other-R=3}).

Suppose then that $|B|\ge 3$, thus $CH(B)$ has at least three vertices.
Since $|R|\ge 4$ there exists a triangle $t'$ of 
$\mathcal{T}$ sharing an edge with $t$. Assume w.l.o.g.\ that such an edge is $\overline{ab}$,
and denote by $d$ the other vertex of $t'$. Further assume w.l.o.g.\ that $\ell(a,b)$ is horizontal,
and $d$ is below $\ell(a,b)$.

Let $u:=f(a,b,c,B)$. Observe that $u$ is a vertex of $CH(B)$. Let $v\in B$ denote
the vertex succeeding $u$ in $CH(B)$ in the counter-clockwise order, and
$w\in B$ denote
the vertex succeeding $u$ in $CH(B)$ in the clockwise order. Both $v$ and $w$ are not below
the horizontal line through $u$ by the definition of $u$. 
If either $\ell(u,w)$ or $\ell(u,v)$ does not
intersect $\overline{ab}$, 
then there is a balanced convex $4$-hole by Lemma~\ref{lem:ch-contains-other-R=3}.
Suppose then that both $\ell(u,w)$ and $\ell(u,v)$ intersect
$\overline{ab}$. Refer to Figure~\ref{fig:blue-inside-4-red}.

\begin{figure}[h]
	\centering
	\subfloat[Case 1]{
		\includegraphics[scale=0.65]{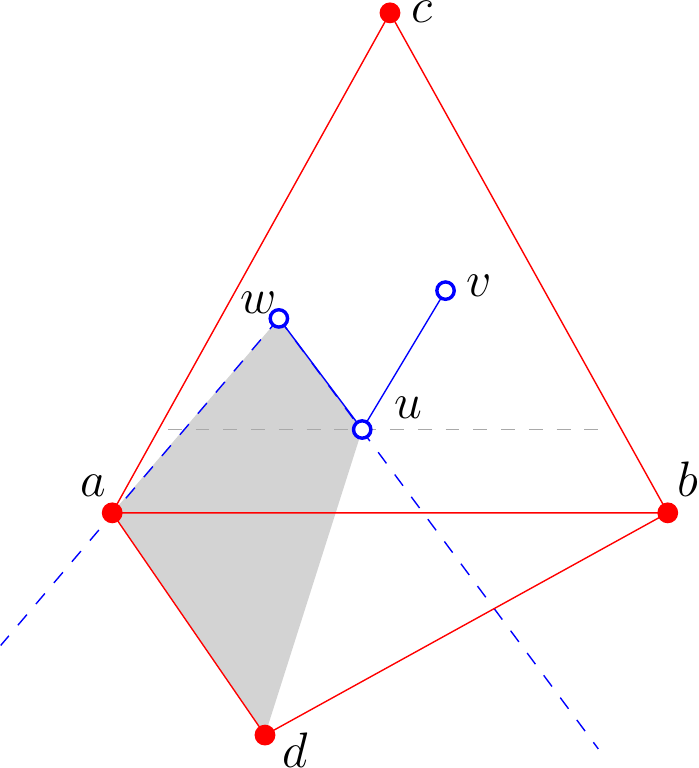}
		\label{fig:blue-inside-4-red-a}
	}\hspace{0.2cm}
	\subfloat[Case 2]{
		\includegraphics[scale=0.65]{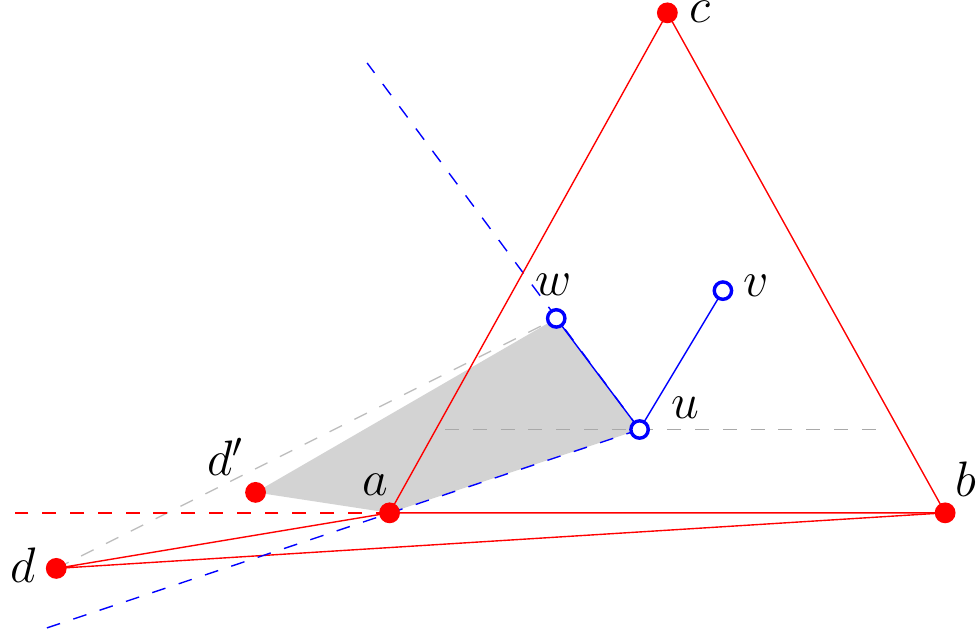}
		\label{fig:blue-inside-4-red-b}
	}\\
	\subfloat[Case 3]{
		\includegraphics[scale=0.65]{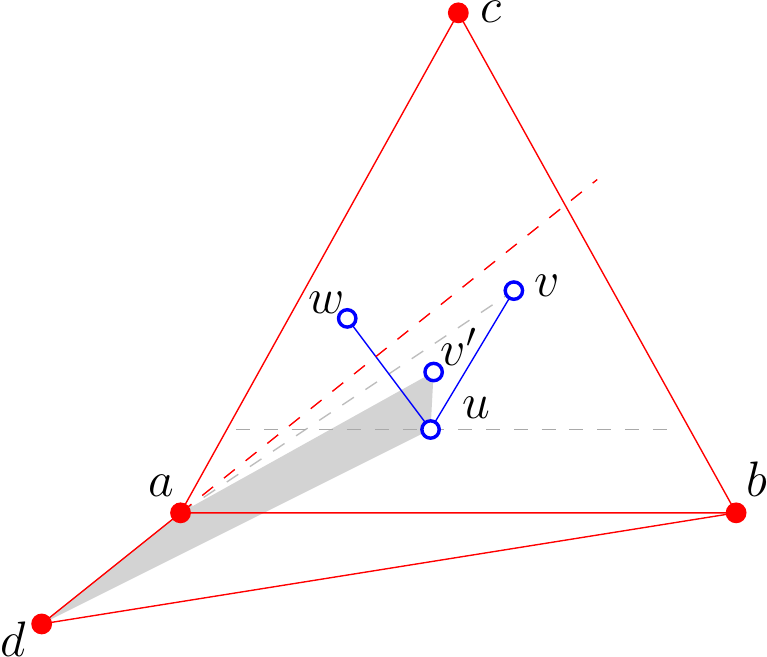}
		\label{fig:blue-inside-4-red-c}
	}\hspace{0.2cm}
	\subfloat[Case 4]{
		\includegraphics[scale=0.65]{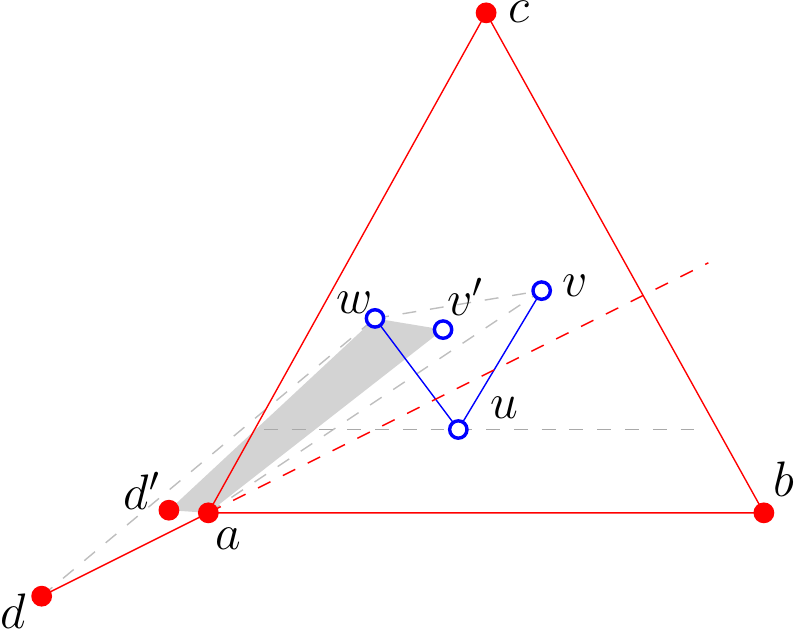}
		\label{fig:blue-inside-4-red-d}
	}
	\caption{\small{Proof of Lemma~\ref{lem:ch-contains-other-R>=4,B>=2}.}}
	\label{fig:blue-inside-4-red}
\end{figure}

We consider the following four cases according to the possible locations of point $d$,
by assuming w.l.o.g.\ that point $d$ is to the left of $\ell(u,w)$. The other symmetric cases 
arise when $d$ is to the right of $\ell(u,v)$.

{\em Case 1:} $d\in\mathcal{W}(w,a,u)$ (see Figure~\ref{fig:blue-inside-4-red-a}).
The quadrilateral with vertex set $\{a,d,u,w\}$ is a balanced convex $4$-hole.

{\em Case 2:} $d\in\mathcal{W}(u,a,w)$ (see Figure~\ref{fig:blue-inside-4-red-b}).
The quadrilateral with vertex set $\{d',a,u,w\}$ is a balanced convex $4$-hole,
where $d'=f(w,a,d,R)$.

{\em Case 3:} $d\notin \mathcal{W}(w,a,u)\cup \mathcal{W}(u,a,w)$ and $\ell(a,d)\cap\overline{uv}=\emptyset$ 
(see Figure~\ref{fig:blue-inside-4-red-c}).
The quadrilateral with vertex set $\{a,d,u,v'\}$ is a balanced convex $4$-hole,
where $v'=f(a,u,v,B)$.

{\em Case 4:} $d\notin \mathcal{W}(w,a,u)\cup \mathcal{W}(u,a,w)$ and $\ell(a,d)\cap\overline{uv}\neq\emptyset$ 
(see Figure~\ref{fig:blue-inside-4-red-d}).
The quadrilateral with vertex set $\{d',a,v',w\}$ is a balanced convex $4$-hole,
where $d'=f(a,w,d,R)$ and $v'=f(a,w,v,B)$.

Since any location of $d$ is covered by one of the above cases 
(or by one of their symmetric ones),
there exists a balanced convex $4$-hole. The result follows.
\end{proof}

By combining Lemma~\ref{lem:crossing-ch-boundaries}, Lemma~\ref{lem:ch-contains-other-R=3}, and
Lemma~\ref{lem:ch-contains-other-R>=4,B>=2}, we obtain the following result that completely
characterizes the non-linearly separable bichromatic point sets that have a balanced convex $4$-hole.

\begin{theorem}\label{thm:convex-4-holes-not-separable}
Let $S=R \cup B$ be a bichromatic point set such that $R$ and $B$ are not linearly separable.
Then $S$ has a balanced convex $4$-hole if and only if one 
of the following conditions holds:
\begin{enumerate}
\item $CH(B)\subset CH(R)$, $|R|=3$, $|B|\ge 2$, and there is a blue-blue
edge $\overline{uv}$ of $CH(B)$ such that one of the open half-planes bounded
by $\ell(u,v)$ contains exactly 2 red points and no blue point.
\item $CH(R)\subset CH(B)$, $|B|=3$, $|R|\ge 2$, and there is a red-red
edge $\overline{uv}$ of $CH(R)$ such that one of the open half-planes bounded
by $\ell(u,v)$ contains exactly 2 blue points and no red point.
\item $CH(B) \subset CH(R)$, $|R| \geq 4$, $|B| \geq 2$,
\item $CH(R) \subset CH(B)$, $|B| \geq 4$, $|R| \geq 2$,
\item The boundaries of $CH(B)$ and $CH(R)$ intersect each other.
\end{enumerate}
\end{theorem}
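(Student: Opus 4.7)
The plan is to assemble the three structural lemmas (Lemmas~\ref{lem:crossing-ch-boundaries}, \ref{lem:ch-contains-other-R=3}, and~\ref{lem:ch-contains-other-R>=4,B>=2}) together with their obvious color-symmetric counterparts. Each of the five listed conditions is exactly the hypothesis of one of these lemmas, so the ``if'' direction of the characterization should follow by a one-to-one correspondence: condition~5 invokes Lemma~\ref{lem:crossing-ch-boundaries}, conditions~3 and~4 invoke Lemma~\ref{lem:ch-contains-other-R>=4,B>=2} in the two color assignments, and conditions~1 and~2 invoke the sufficient direction of Lemma~\ref{lem:ch-contains-other-R=3}.

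For the ``only if'' direction, I would start by noting that the existence of a balanced convex $4$-hole forces $|R|,|B|\ge 2$. Since $R$ and $B$ are not linearly separable, $CH(R)\cap CH(B)\neq\emptyset$, so by the general position assumption exactly one of the following occurs: the boundaries of $CH(R)$ and $CH(B)$ cross (which is condition~5); $CH(B)\subsetneq CH(R)$, forcing $|R|\ge 3$; or $CH(R)\subsetneq CH(B)$, forcing $|B|\ge 3$. In the containment case $CH(B)\subset CH(R)$, if $|R|\ge 4$ then condition~3 holds verbatim, while if $|R|=3$ the ``only if'' direction of Lemma~\ref{lem:ch-contains-other-R=3} delivers the blue-blue edge demanded by condition~1. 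The case $CH(R)\subset CH(B)$ is handled symmetrically, yielding condition~4 or condition~2.

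There is no real main obstacle here: the theorem is essentially a clean packaging of the earlier lemmas. The only small points to verify are that the trichotomy on how $CH(R)$ and $CH(B)$ meet (boundaries crossing, or one strictly inside the other) is exhaustive under non-separability, and that the size thresholds already built into the lemmas line up exactly with the size thresholds appearing in the five listed conditions, so the coverage is one-to-one with no overlap or gap.
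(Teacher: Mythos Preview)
Your proposal is correct and matches the paper's approach exactly: the paper does not give a written-out proof of this theorem at all, but simply states that it follows ``by combining Lemma~\ref{lem:crossing-ch-boundaries}, Lemma~\ref{lem:ch-contains-other-R=3}, and Lemma~\ref{lem:ch-contains-other-R>=4,B>=2}.'' Your argument is precisely the unpacking of that sentence---the trichotomy on how $CH(R)$ and $CH(B)$ intersect, the size thresholds forced by containment under general position, and the one-to-one matching of the five conditions with the three lemmas (plus color swaps)---so there is nothing to add or correct.
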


\subsection{$R$ and $B$ are linearly separable}\label{sec:convex-4-holes:separability}

In the rest of this section, we will assume that $R$ and $B$ are linearly separable. 
At first glance, one might be tempted to think that if the cardinalities
of $R$ and $B$ are large enough, 
then $S$ always contains balanced convex $4$-holes.
This certainly happens in the point set of Figure~\ref{img:example-quadratic-4-holes},
in which $R$ and $B$ are far enough from each other.
There are, however, examples of linearly separable bicolored point sets
with an arbitrarily large number of points that do not contain any
balanced convex $4$-hole. For instance, the point set shown in 
Figure~\ref{fig:no-convex-4-hole-d} has no balanced convex $4$-hole. 
Observe in this example that if we choose a red-red edge and a blue-blue edge,
the convex hull of their vertices is either a triangle or a convex quadrilateral
that contains at least one other point in its interior. 

Given an edge $e$ of $CH(R)$ and an edge $e'$ of $CH(B)$,
we say that $e$ and $e'$ {\em see each other} if the union
of the sets of their vertices defines a balanced convex $4$-hole whose
interior intersects with neither $CH(R)$ nor $CH(B)$. We 
assume that there exists a non-horizontal line $\ell$ such that
the elements of $R$ are located to the left of $\ell$ and
the elements of $B$ are located to the right.

\begin{definition}
Let $S=R \cup B$ be a bicolored point set such that $R$ and $B$ are linearly separable.
Conditions C1 and C2 are defined as follows:
\begin{itemize}
\item[C1.] There exist an edge $e$ of $CH(R)$ and an edge $e'$ of $CH(B)$ such that
$e$ and $e'$ see each other.
\item[C2.] There exists an edge $\overline{uv}$ of $CH(R)$ and points $b,z\in B$ such that
$z\in \Delta uvb$, $R\cap \Delta uvb=\emptyset$, and $R\cap\mathcal{W}(b,u,v)\neq\emptyset$;
or this statement holds if we swap $R$ and $B$.
\end{itemize}
\end{definition}

\begin{lemma}\label{lem:red-blue-edge-crossed}
Let $S=R \cup B$ be a bicolored point set such that $R$ and $B$ are linearly separable.
If there exist a point $r\in R$, a point $b\in B$, an edge $e$ of $CH(R)$, and an
edge $e'$ of $CH(B)$, such that the interiors of $e$ and $e'$ intersect with the interior
of $\overline{rb}$, then C1 or C2 holds.
\end{lemma}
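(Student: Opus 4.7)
Write $e=\overline{u_1u_2}$ with $u_1,u_2\in R$ and $e'=\overline{v_1v_2}$ with $v_1,v_2\in B$. The hypothesis forces $r,b$ to lie on opposite open sides of both $\ell(u_1,u_2)$ and $\ell(v_1,v_2)$, while the hull-edge property places $R$ (resp.\ $B$) in the closed half-plane of $\ell(u_1,u_2)$ (resp.\ $\ell(v_1,v_2)$) containing its own convex hull. Linear separability additionally forces $e\cap e'=\emptyset$, so $e$ and $e'$ cannot cross as segments.

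The plan is to case-split on whether $v_1,v_2$ both lie strictly on the side of $\ell(u_1,u_2)$ opposite $CH(R)$ and whether $u_1,u_2$ both lie strictly on the side of $\ell(v_1,v_2)$ opposite $CH(B)$. If both conditions hold, I claim the pair $(e,e')$ itself witnesses C1: the half-plane assumptions rule out any of the four points lying inside the triangle of the other three (such an inclusion would force a pair of vertices onto opposite sides of one of the two lines), so $\{u_1,u_2,v_1,v_2\}$ is in convex position with cyclic order $u_1,u_2,v_1,v_2$, yielding a convex quadrilateral $Q$ with $e,e'$ as opposite sides. The interior of $Q$ sits in the intersection of the two ``outside'' open half-planes, so it contains no element of $R\cup B$ and is disjoint from both $CH(R)$ and $CH(B)$; hence $e$ and $e'$ see each other.

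Otherwise, by the $R\leftrightarrow B$ symmetry (which interchanges the two forms of C2) I may assume WLOG that $v_1$ lies on the open $CH(R)$-side of $\ell(u_1,u_2)$, and my plan is to derive the swapped form of C2 with $\overline{uv}=\overline{v_1v_2}$. The driving observation is that in the principal sub-case, where $v_2$ lies on the opposite side of $\ell(u_1,u_2)$ from $v_1$, the four points $\{u_1,u_2,v_1,v_2\}$ cannot be in convex position: the cyclic order $u_1u_2v_1v_2$ would put $v_1,v_2$ on a common side of $\ell(u_1,u_2)$, and the cyclic order $u_1v_1u_2v_2$ would make $e,e'$ crossing diagonals, excluded by $e\cap e'=\emptyset$. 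So one of the four points lies inside the triangle of the other three; the sidedness argument rules out a blue vertex playing this role (it would force $u_1,u_2$ onto opposite sides of $\ell(v_1,v_2)$, hence $e,e'$ to cross as segments), leaving WLOG $u_1\in\Delta u_2v_1v_2$. I then take apex $r^\ast=u_2$ and witness $z=u_1$: $z$ sits inside $\Delta v_1v_2u_2$; the crossing of $\overline{rb}$ with $\overline{v_1v_2}$ places $b$ inside the wedge $\mathcal W(u_2,v_1,v_2)$ (in the generic case where $v_1,v_2$ are the extreme vertices of $CH(B)$ visible from $u_2$); and when $u_2$ lies on the non-$CH(B)$-side of $\ell(v_1,v_2)$, the triangle $\Delta v_1v_2u_2$ is blue-empty and C2 holds.

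The main obstacle is patching two exceptional configurations that the above construction does not immediately cover: (a) $u_2$ lies on the $CH(B)$-side of $\ell(v_1,v_2)$, so $\Delta v_1v_2u_2$ may harbor a blue point and the emptiness condition of C2 fails; and (b) $v_1$ and $v_2$ both lie on the $CH(R)$-side of $\ell(u_1,u_2)$, so the convex-position dichotomy above collapses. For (a) my plan is to change the apex — to the other endpoint of $e$ (using the symmetric configuration $u_2\in\Delta u_1v_1v_2$) or to $r$ itself (which by the hypothesis is on the non-$CH(B)$-side of $\ell(v_1,v_2)$) — and to re-derive a red witness using the protrusion of $CH(R)$ into the new triangle. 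For (b) my plan is to exploit that $v_1\in B$ is outside $CH(R)$ yet on its $\ell(u_1,u_2)$-side, so there must exist a different hull edge $\hat e$ of $CH(R)$ with $v_1$ on its non-$CH(R)$-side; replacing $e$ by $\hat e$ and re-running the case analysis either reaches Case~A for the new pair (giving C1) or reduces to the principal sub-case of Case~B. A minimality argument — taking among candidate witness configurations one that minimizes the area of the resulting triangle — ensures that any extraneous red or blue clutter inside the minimal triangle would permit a strictly smaller candidate, and so the procedure terminates with a valid C1 or C2 witness.
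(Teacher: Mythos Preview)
Your Case~A argument (both endpoints of $e'$ strictly on the non-$CH(R)$ side of $\ell(u_1,u_2)$ and symmetrically for $e$) is fine and does yield C1. The difficulties are all in Case~B.

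The real gap is your choice of apex in the swapped form of C2. Having found $u_1\in\Delta u_2v_1v_2$, you set $r^\ast=u_2$ and then need $B\cap\mathcal W(u_2,v_1,v_2)\neq\emptyset$. You assert that $b$ lies in this wedge, but the hypothesis only says that $\overline{rb}$ crosses $\overline{v_1v_2}$; it says nothing about $\overline{u_2b}$, and your own parenthetical ``in the generic case where $v_1,v_2$ are the extreme vertices of $CH(B)$ visible from $u_2$'' concedes this. That visibility condition is not part of the hypotheses and can easily fail, so the wedge condition is simply unproven. Your exceptional cases (a) and (b) are then left as plans rather than arguments; in fact both are vacuous (the $e$-crossing of $\overline{rb}$ lies on the non-$CH(B)$ side of $\ell(v_1,v_2)$ and the $e'$-crossing lies on the non-$CH(R)$ side of $\ell(u_1,u_2)$, which kills (a) and (b) directly), but you did not see this, and the ``minimality argument'' you invoke is not spelled out.

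The paper's proof avoids all of this by choosing the apex differently: it uses the given point $b$ itself, not an endpoint of $e$. Taking the edge $\overline{uv}=e$ of $CH(R)$ and apex $b$, the triangle $\Delta uvb$ lies entirely on the non-$CH(R)$ side of $\ell(u,v)$, so it is automatically red-empty; and $r\in\mathcal W(b,u,v)$ is immediate because $\overline{rb}$ crosses $\overline{uv}$. The only remaining task is to place one endpoint of $e'$ inside $\Delta uvb$, and this drops out of a two-line convexity observation (that endpoint lies on $\ell(e')$ between its intersections with $\overline{rb}$ and with $\overline{uv}$, both of which are in the closed triangle). Switching your apex from $u_2$ to $b$ (equivalently, in the swapped version, from $v_i$ to $r$) would rescue your argument and collapse it to essentially the paper's four-line proof.
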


\begin{proof}
Let $u$ and $v$ be the endpoints of $e$ and $w$ and $z$ the endpoints of $e'$.
Assume
w.l.o.g.\ that $\ell(r,b)$ is horizontal, $u$ and $w$ are above $\ell(r,b)$,
and then $v$ and $z$ are below $\ell(r,b)$. If $e$ and $e'$ see each other (see Figure~\ref{fig:linear-sep-lemma-a}),
then {\em C1} holds.
Otherwise, assume w.l.o.g.\ that $z$ is contained in $\Delta uvw$ (see Figure~\ref{fig:linear-sep-lemma-b}).
We have $z\in \Delta uvb$ because $z$ lies between the intersections of $\ell(w,z)$ with
$\overline{rb}$ and $\overline{uv}$, which both are in the closure of $\Delta uvb$.
This implies that $R\cap \Delta uvb=\emptyset$ and $r\in\mathcal{W}(b,u,v)$.
Then {\em C2} is satisfied.
\begin{figure}[h]
	\centering
	\subfloat[]{
		\includegraphics[scale=0.6]{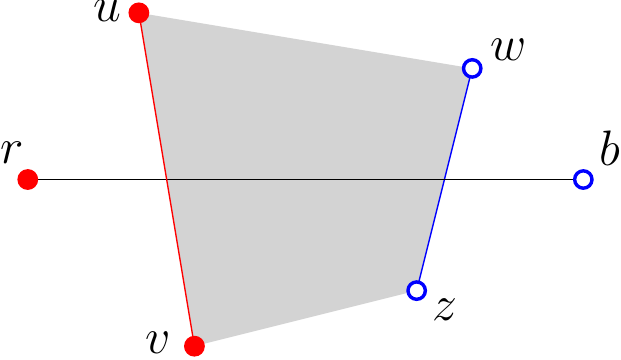}
		\label{fig:linear-sep-lemma-a}
	}\hspace{0.5cm}
	\subfloat[]{
		\includegraphics[scale=0.6]{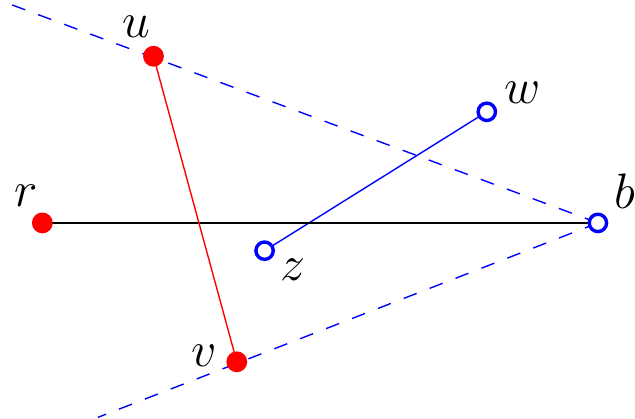}
		\label{fig:linear-sep-lemma-b}
	}
	\caption{\small{Proof of Lemma~\ref{lem:red-blue-edge-crossed}.}}
	\label{fig:linear-sep-lemma}
\end{figure}
\end{proof}

\begin{theorem}\label{thm:convex-4-hole-separable}
A bichromatic point set $S=R \cup B$, such that $R$ and $B$ are linearly separable,
has a balanced convex $4$-hole if and only if C1 or C2 holds.
\end{theorem}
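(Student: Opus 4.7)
The plan is to prove both directions of the equivalence. The backward direction is a direct construction; the forward direction uses Lemma~\ref{lem:red-blue-edge-crossed} combined with a residual case analysis.

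If C1 holds, the union of the vertex sets of the two hull edges is, by the definition of ``see each other,'' a balanced convex quadrilateral whose open interior is disjoint from $CH(R)\cup CH(B)\supseteq S$, so it is a balanced convex $4$-hole. If C2 holds, say with edge $\overline{uv}$ of $CH(R)$, blue points $b,z$ satisfying $z\in\Delta uvb$ and $R\cap\Delta uvb=\emptyset$, and red $\widehat r\in R\cap\mathcal{W}(b,u,v)$, I construct a balanced convex $4$-hole using $\widehat r$. Let $z^*:=f(u,v,b,B)$; by the minimality defining $f$, the open triangle $\Delta uvz^*$ contains no blue point of $B$, and since $\Delta uvz^*\subseteq\Delta uvb$ it contains no red point either. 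Since $\widehat r\in\mathcal{W}(b,u,v)$, the segment $\overline{\widehat r\,b}$ crosses $\overline{uv}$ at an interior point, so exactly one of $u,v$, call it $w$, lies on the opposite side of $\ell(\widehat r,b)$ from $z^*$. The quadrilateral on $\{w,\widehat r,z^*,b\}$ is then convex (its diagonals $\overline{w\,z^*}$ and $\overline{\widehat r\,b}$ cross in the interior), balanced, and empty of $S$: the triangle cut by the diagonal $\overline{\widehat r\,b}$ on the $z^*$-side is contained in $\Delta uvz^*$ and hence empty, while the triangle on the $w$-side lies on the $R$-interior side of $\ell(u,v)$ where no blue occurs by separability, and a final application of $f$ to the reds inside this half (if needed) ensures emptiness of reds there as well.

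For the forward direction, let $Q$ be a balanced convex $4$-hole with red vertices $r_1,r_2$ and blue vertices $b_1,b_2$. Linear separability forces the two reds to be adjacent in the cyclic order on $Q$: otherwise the diagonals of $Q$ would be $\overline{r_1r_2}$ (entirely on the $R$-side of the separating line) and $\overline{b_1b_2}$ (entirely on the $B$-side), but a convex quadrilateral has its two diagonals crossing inside, contradicting the disjointness of these segments. Consequently $\overline{r_1r_2}$ and $\overline{b_1b_2}$ are opposite same-color edges of $Q$. I then look for a bichromatic segment whose interior is crossed by the interiors of both a $CH(R)$-edge and a $CH(B)$-edge, so as to invoke Lemma~\ref{lem:red-blue-edge-crossed}: if any of the four bichromatic segments among $\{r_1,r_2,b_1,b_2\}$ (the two bichromatic edges and the two diagonals of $Q$) has one endpoint in the interior of $CH(R)$ and the other in the interior of $CH(B)$, the lemma directly yields C1 or C2.

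The residual case is when all four vertices of $Q$ are hull vertices of their respective colors. If both $\overline{r_1r_2}$ is a $CH(R)$-edge and $\overline{b_1b_2}$ is a $CH(B)$-edge, then $Q$'s interior lies on the $B$-side of $\ell(r_1,r_2)$ and on the $R$-side of $\ell(b_1,b_2)$, so its interior is disjoint from $CH(R)\cup CH(B)$, giving C1. Otherwise, say $\overline{r_1r_2}$ is not a $CH(R)$-edge; then some red $r'$ lies on the $Q$-side of $\ell(r_1,r_2)$ and, since $Q$ is empty, strictly outside $Q$ in the ``cap'' of $CH(R)$ over $\overline{r_1r_2}$. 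I locate a $CH(R)$-edge $\overline{uv}$ on the boundary arc of $CH(R)$ from $r_1$ to $r_2$ on the $Q$-side that faces the blue points of $Q$, and verify C2 with this $\overline{uv}$, with $\{b,z\}=\{b_1,b_2\}$ (the triangle $\Delta uvb$ contains $z$ because the convex position of $Q$ and the location of $\overline{uv}$ in the cap force the other blue into $\Delta uvb$, and $\Delta uvb$ is free of reds because $\overline{uv}$ is a hull edge with the remaining reds on the opposite side of $\ell(u,v)$), and with $r'$ (or one of $r_1,r_2$) serving as the witness red in $\mathcal{W}(b,u,v)$. The main obstacle is this residual case: identifying the correct hull edge $\overline{uv}$ and the blue partners that witness C2 requires a careful geometric analysis of the cap of $CH(R)$ between $r_1$ and $r_2$ on the $Q$-side together with the positions of the blue vertices of $Q$ relative to this cap.
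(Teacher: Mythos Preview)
Your proposal has a genuine gap in the backward direction (C2 $\Rightarrow$ balanced convex $4$-hole). You claim that in the quadrilateral on $\{w,\widehat r,z^*,b\}$ the triangle on the $z^*$-side of the diagonal $\overline{\widehat r\,b}$ is contained in $\Delta uvz^*$. This is false: that triangle is $\Delta \widehat r z^* b$, and neither $\widehat r$ (which lies on the opposite side of $\ell(u,v)$ from $z^*$) nor $b$ (which lies farther from $\overline{uv}$ than $z^*$) belongs to $\Delta uvz^*$. In particular $\Delta \widehat r z^* b$ may contain many blue points of $B\cap\Delta uvb$ lying beyond $z^*$. Similarly, your claim that the $w$-side triangle lies entirely on the $R$-side of $\ell(u,v)$ fails because $b$ is a vertex of that triangle. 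The paper fixes this by not using $b$ and $\widehat r$ directly: after setting $z':=f(u,v,b,B)$ it observes that $\widehat r$ lies in $\mathcal{W}(b,u,z')$ or $\mathcal{W}(b,z',v)$ (say the former), and then takes the quadrilateral $\{r',z',b',u\}$ with $r':=f(u,z',\widehat r,R)$ and $b':=f(u,z',b,B)$. These two extra applications of $f$ are exactly what make the emptiness argument go through; your single application of $f$ is not enough.

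Your forward direction is also incomplete, as you yourself flag. Two issues: first, the residual after invoking Lemma~\ref{lem:red-blue-edge-crossed} is not ``all four vertices are hull vertices'' but the weaker ``both reds are hull vertices or both blues are hull vertices,'' so you are missing cases. Second, in the genuine residual you never pin down the hull edge $\overline{uv}$ nor verify that one blue vertex of $Q$ lies in the triangle $\Delta uvb$ while a red witness lies in $\mathcal{W}(b,u,v)$. The paper's argument avoids this difficulty by a different organizing device: it takes the single edge $e$ of $CH(R)$ that is crossed by both bichromatic edges $\overline{uw},\overline{vz}$ of $Q$ (and likewise $e'$ for $CH(B)$), and then splits into three cases according to whether $e=\overline{r_1r_2}$ and/or $e'=\overline{b_1b_2}$. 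In the mixed case it directly reads off C2 from $e$ and the blue vertices of $Q$; in the case where both are different it finds a bichromatic segment among the edges and diagonals of $Q$ whose interior crosses the interiors of both $e$ and $e'$, and applies Lemma~\ref{lem:red-blue-edge-crossed}. This choice of $e,e'$ is the key idea you are missing in the forward direction.
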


\begin{proof}
If condition {\em C1} holds then $S$ has trivially a balanced convex $4$-hole.
Then suppose that condition {\em C2} holds. 
Let $z':=f(u,v,b,B)$ and observe that $z'\neq b$ since $z\in \Delta uvb$.
Let $r$ be any red point in $R\cap\mathcal{W}(b,u,v)$ (see Figure~\ref{fig:linear-sep-c}).
Observe that we have either $r\in\mathcal{W}(b,u,z')$ or $r\in\mathcal{W}(b,z',v)$.
Assume w.l.o.g.\ the former case. 
Then the quadrilateral with
vertex set $\{r',z',b',u\}$ is a balanced convex $4$-hole,
where $r':=f(u,z',r,R)$ and $b':=f(u,z',b,B)$.  

\begin{figure}[h]
	\centering
	\subfloat[]{
		\includegraphics[scale=0.6]{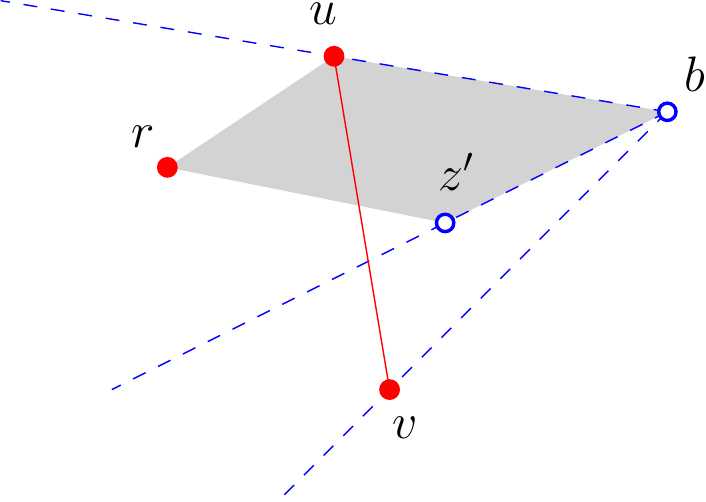}
		\label{fig:linear-sep-c}
	}\hspace{0.5cm}
	\subfloat[]{
		\includegraphics[scale=0.6]{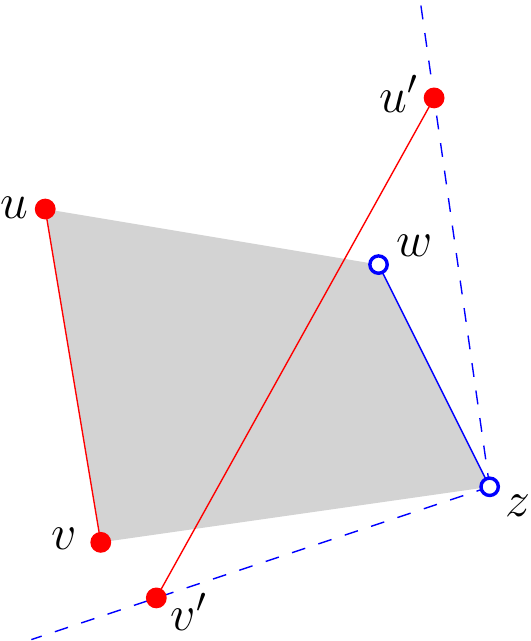}
		\label{fig:linear-sep-d}
	}
	\caption{\small{Proof of Theorem~\ref{thm:convex-4-hole-separable}.}}
	\label{fig:linear-sep}
\end{figure}

Suppose now that $S$ has a balanced convex $4$-hole with vertices 
$u,v,z,w$ in counter-clockwise order, where $u,v\in R$ and $w,z\in B$.
Let $e$ and $e'$ be the edges of $CH(R)$ and $CH(B)$, respectively, that intersect 
with both $\overline{uw}$ and $\overline{vz}$ (note that $e$ and $e'$
might share vertices with $\overline{uv}$ and $\overline{wz}$, respectively).
If we have that $e=\overline{uv}$ and $e'=\overline{wz}$ then
$e$ and $e'$ see each other, and thus {\em C1} holds.
Otherwise, if $e\neq\overline{uv}$ and $e'\neq\overline{wz}$ then
the interior of some edge among $\overline{uw}$, $\overline{uz}$, $\overline{vw}$, and $\overline{vz}$
intersects the interiors of both $e$ and $e'$.
Then, by Lemma~\ref{lem:red-blue-edge-crossed}, we have that
{\em C1} or {\em C2} holds.
Otherwise, there are two cases to consider: 
(1) $e\neq\overline{uv}$ and $e'=\overline{wz}$; and 
(2) $e=\overline{uv}$ and $e'\neq\overline{wz}$.
Consider case (1), case (2) is analogous. Let $e:=\overline{u'v'}$.
If $e$ and $e'$ see each other, then {\em C1} holds.
Otherwise (up to symmetry), $w$ belongs to $\Delta u'v'z$ (see Figure~\ref{fig:linear-sep-d}).
Since $R\cap \Delta u'v'z=\emptyset$ and $u\in\mathcal{W}(z,u',v')$,
we have that {\em C2} is satisfied. 
\end{proof}

\section{Discussion}\label{sec:discussion}

{\bf A better counting of black edges:}
In the proof of our lower bounds, we
considered the edges colored black, as those being
edges of the convex hull of $S=R\cup B$ ($|R|=|B|=n$) that
connect a red point with a blue point and are neither
an edge nor a diagonal of any balanced $4$-hole. Specifically, in the proof
of Lemma~\ref{lem:number-green-edges}, we gave the
simple upper bound $2n$ for the number of black edges,
but one can note that this bound can be improved.
Nevertheless, any upper bound must be at least $n/2$
since the following bicolored point set
has precisely $n/2$ black edges.

Let $n=4k$ and consider a regular $2k$-gon $Q$. Put a colored
point at each vertex of $Q$ such that
the colors of its vertices alternate along its boundary. Orient the edges of
$Q$ counter-clockwise. Then for each edge $e$ of $Q$
put in the interior of $Q$ three points of the color of 
the origin vertex of $e$
such that they are close enough to $e$ and ensure that there
is no balanced $4$-hole with $e$ as edge.
In total we have $8k$ points, consisting of $4k$ red points (i.e.\ $k$
red points in vertices of $Q$ and $3k$ red points in the interior of $Q$) and $4k$ blue points.
See for example Figure~\ref{fig:many-black-edges}, in which $k=2$.
Then, all the $2k=n/2$ edges of $Q$ are black.

\begin{figure}[h]
	\centering
	\includegraphics[scale=0.6]{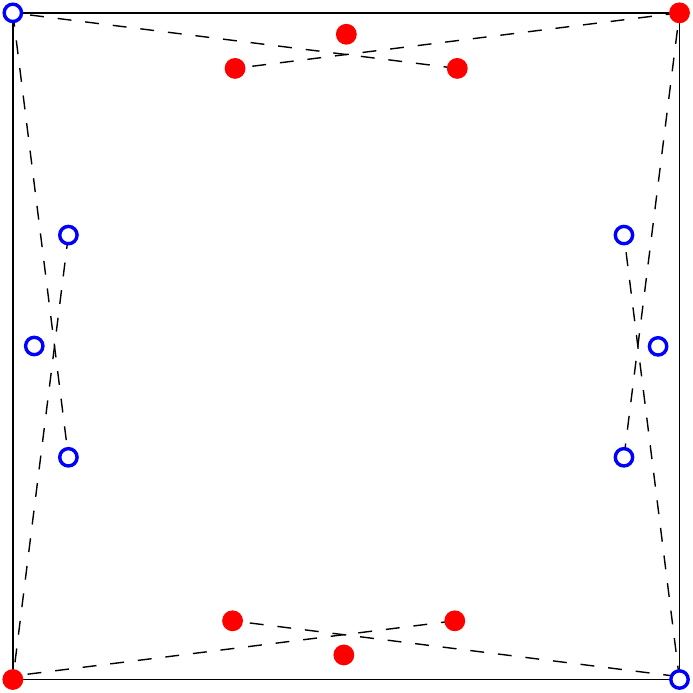}
	\caption{\small{A point set with many black edges.}}
	\label{fig:many-black-edges}
\end{figure}

\medskip

{\bf Generalization of the lower bound for non-balanced point sets:} 
Let $S=R\cup B$ be a red-blue colored point set such that $|R|\neq|B|$.
Let $\nr:=|R|$ and $\nb:=|B|$. Using
arguments similar to the ones used in Section~\ref{sec:lower-bound}, 
it can be proved that $S$ has at least
$$\nr\cdot\nb-\nr\cdot\min\left\{\left\lfloor\frac{\nr-1}{3}\right\rfloor,\nb\right\}
-\nb\cdot\min\left\{\left\lfloor\frac{\nb-1}{3}\right\rfloor,\nr\right\}-(\nr+\nb)$$
balanced $4$-holes.
Observe that this bound is positive if and only if 
$\nb$ is at least roughly $\nr(3-\sqrt{5})/2\approx 0.382\cdot\nr$ and
$\nr$ is at least $\nb(3-\sqrt{5})/2\approx 0.382\cdot\nb$.
Therefore,
we leave as an open problem to obtain a lower bound for the cases in which
the number of points of one color is below $0.382$ times the number 
of points of the other color.
%

\medskip

{\bf Existence of convex $4$-holes, either balanced or monochromatic:} 
Combining the characterization
given by Theorem~\ref{thm:convex-4-holes-not-separable} and by
Theorem~\ref{thm:convex-4-hole-separable}, we obtain the following
result:

\begin{proposition}
Let $S=R\cup B$ a bicolored point set in the plane. If $|R|,|B|\geq 4$
then $S$ always has a convex $4$-hole either balanced or monochromatic.
\end{proposition}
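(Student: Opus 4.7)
The plan is to case-split on whether $R$ and $B$ are linearly separable and then invoke the two characterization theorems.

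If $R$ and $B$ are not linearly separable, I would apply Theorem~\ref{thm:convex-4-holes-not-separable} directly. Either the boundaries of $CH(R)$ and $CH(B)$ intersect, triggering condition 5; or one hull is contained in the other, in which case the hypothesis $|R|,|B|\ge 4$ immediately satisfies condition 3 (when $CH(B)\subset CH(R)$) or condition 4 (when $CH(R)\subset CH(B)$). In every sub-case a balanced convex $4$-hole exists.

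If $R$ and $B$ are separated by a line $\ell$, I would start by observing that any convex polygon whose vertices lie in a single colour class has its interior disjoint from the other class, so a convex $4$-hole of the monochromatic subset is automatically a convex $4$-hole of $S$. Hence if $|R|\ge 5$ the classical fact that any five points in general position contain a convex $4$-hole yields a monochromatic convex $4$-hole (and symmetrically if $|B|\ge 5$); and if $|R|=|B|=4$ and either colour class is in convex position, those four points already form a monochromatic convex $4$-hole. The only remaining sub-case is $|R|=|B|=4$ with $CH(R)$ and $CH(B)$ both triangles, each containing a fourth point in its interior --- call these interior points $r_4$ and $b_4$.

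For this last sub-case I would invoke Lemma~\ref{lem:red-blue-edge-crossed} with the choice $r:=r_4$ and $b:=b_4$. Since $r_4$ lies strictly inside $CH(R)$ and $b_4$ strictly inside $CH(B)$, and $CH(R)\cap CH(B)=\emptyset$ by separability, the segment $\overline{r_4b_4}$ must leave $CH(R)$ through some edge $e$ and enter $CH(B)$ through some edge $e'$; general position forces both crossings to lie in the relative interiors of $e$ and $e'$. The hypotheses of Lemma~\ref{lem:red-blue-edge-crossed} are thus met, so condition C1 or C2 of Theorem~\ref{thm:convex-4-hole-separable} holds, and $S$ possesses a balanced convex $4$-hole.

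The main obstacle is this last sub-case. The key insight is that the two \emph{defect} points $r_4$ and $b_4$ --- precisely the ones preventing $R$ and $B$ from being in convex position --- connect via a segment whose two boundary-crossings witness the hypotheses of Lemma~\ref{lem:red-blue-edge-crossed}. Once this is noticed, the characterization theorem closes the proof immediately; the other, easier sub-cases are handled either by Theorem~\ref{thm:convex-4-holes-not-separable} or by the observation that a convex $4$-hole of a monochromatic subset survives the addition of a strictly separated opposite-coloured set.
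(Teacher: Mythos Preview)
Your argument is correct and follows the same skeleton as the paper's proof: the non-separable case goes through Theorem~\ref{thm:convex-4-holes-not-separable}, and in the separable case the crucial sub-case (both hulls having an interior point) is handled via Lemma~\ref{lem:red-blue-edge-crossed} and Theorem~\ref{thm:convex-4-hole-separable}, exactly as the paper does.

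The only difference is in how you dispose of the ``easy'' separable sub-cases. You split on $|R|\ge 5$ versus $|R|=|B|=4$ and invoke the external fact $H(4)=5$ (every five points in general position contain an empty convex quadrilateral). The paper avoids this detour: it simply observes that if it is \emph{not} the case that both $CH(R)$ and $CH(B)$ contain an interior point of their own colour, then one of $R,B$ is in convex position, and since that class has at least four points, four consecutive hull vertices already form a monochromatic convex $4$-hole (separation keeps the other colour out). This single dichotomy---``both hulls have an interior point'' versus ``some colour class is in convex position''---replaces your three-way split and makes the appeal to $H(4)=5$ unnecessary.
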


\begin{proof}
If $R$ and $B$ are not linearly separable, then $S$ has a
balanced convex $4$-hole by Theorem~\ref{thm:convex-4-holes-not-separable}. Otherwise,
consider that $R$ and $B$ are linearly separable. If the convex hull of
$R$ contains a red point and the convex hull of $B$ contains a blue point in their interiors, then
$S$ has a balanced convex $4$-hole by Lemma~\ref{lem:red-blue-edge-crossed}. Otherwise,
at least one between $R$ and $B$ is in convex position and then $S$ has a
monochromatic convex $4$-hole.
\end{proof}

\medskip

{\bf Deciding the existence of balanced convex $4$-holes:} 
Using the characterization Theorems~\ref{thm:convex-4-holes-not-separable} 
and~\ref{thm:convex-4-hole-separable}, arguments similar to those given
in Sections~\ref{sec:convex-4-holes:no-separability} and~\ref{sec:convex-4-holes:separability},
and well-known algorithmic results of computational geometry,
we can decide in $O(n\log n)$ time if a given bicolored point set $S=R\cup B$ ($|R|,|B|\ge 2$) of
total $n$ points has a balanced convex $4$-hole. 

We first 
compute the convex hulls $CH(R)$ and $CH(B)$ of $R$ and $B$, respectively.
After that, we decide if $R$ and $B$ are linearly separable. If they are not, 
we can decide in $O(n\log n)$ time whether one of the conditions (1-5) of Theorem~\ref{thm:convex-4-holes-not-separable}
holds. Otherwise, if $R$ and $B$ are linearly separable, we proceed with the following steps, each of them
in $O(n\log n)$ time.
If the decision performed in any of these steps has a positive answer, then a balanced convex $4$-hole exists:
\begin{enumerate}

\item Decide whether the next two conditions hold:
(1) $CH(R)$ contains red points in the interior or $CH(S)$ has
at least three red vertices; and 
(2) $CH(B)$ contains blue points in the interior or $CH(S)$ has
at least three blue vertices. 
If the answer is positive then the conditions of Lemma~\ref{lem:red-blue-edge-crossed}
are met and there thus exists a balanced convex $4$-hole in $S$.
Otherwise, if the answer is
negative, assume w.l.o.g.\ that $B$ is in convex position.

\item Decide whether the conditions of Lemma~\ref{lem:red-blue-edge-crossed} hold
for at least one red point $r$. Fixing a red point $r$, those conditions can be verified
in $O(\log n)$ time as follows: 
Let $b_0,b_1,\ldots,b_{m-1}$ be all the blue points labelled clockwise
along the boundary of $CH(B)$ (subindices are taken modulo $m$).
Let $b_i$ and $b_j$ be the two blue points such that $r\rightarrow b_i$ and $r\rightarrow b_j$
are tangent to $CH(B)$, and let $B_{i,j}:=\{b_{i+1},b_{i+2},\ldots,b_{j-1}\}$ be the points
between $b_i$ and $b_j$ which are on the side of the line $\ell(b_i,b_j)$ opposite
to the side containing $R$.
Note that the boundary of $CH(B)$ intersects the 
interior of $\overline{rb}$ for every $b\in B_{i,j}$.
If $r$ is a vertex of $CH(R)$, then it suffices to verify the existence of a blue point 
$b$ in $B_{i,j} \cap \mathcal{W}(r,r',r'')$,
where $r'$ and $r''$ are the vertices preceding and succeeding $r$, respectively, in
the boundary of $CH(R)$.
Otherwise, if $r$ belongs to the interior of $CH(R)$, then it suffices to verify the 
existence of a blue point $b$ in $B_{i,j}$, that is, $B_{i,j}$ is not empty.
Both $b_i$ and $b_j$ can be found in $O(\log n)$ time, as well the existence
of such a point $b$ can be decided in $O(\log m)=O(\log n)$ time by applying binary search over 
the points $b_{i+1},b_{i+2},\ldots,b_{j-1}$.


\item Decide whether Condition {\em C1} holds. This can be done in $O(n)$ time by simultaneously 
traversing the boundaries of $CH(R)$ and $CH(B)$.

\item Decide whether Condition {\em C2} holds. Using the fact that neither
condition {\em C1} nor the conditions of Lemma~\ref{lem:red-blue-edge-crossed} hold, 
we claim that condition {\em C2} can be decided by assuming that segment $\overline{bz}$ is an edge 
of $CH(B)$ and that point $z$ is the only blue point in the triangle $\Delta uvb$ (the condition
{\em C2} with $R$ and $B$ swapped is similar to decide).
Namely, let $\overline{uv}$ be an edge of $CH(R)$ and $b,z\in B$ be points such that
$z\in \Delta uvb$, $R\cap \Delta uvb=\emptyset$, and $R\cap\mathcal{W}(b,u,v)\neq\emptyset$. 
Let $z':=f(u,v,b,B)\neq b$, and
observe that at least one neighbor of $z'$ in the boundary of $CH(B)$, say $b'$, satisfies
$b'\in\Delta uvb\cup\{b\}$ and $z'$ is the only one blue point in $\Delta uvb'$.
The fact $R\cap\mathcal{W}(b,u,v)\subseteq R\cap\mathcal{W}(b',u,v)$ implies that
we can verify condition {\em C2} with $b'$ being $b$ and $z'$ being $z$, where $\overline{b'z'}$
is an edge of $CH(B)$ (see Figure~\ref{fig:deciding-a}). The claim thus follows.
Therefore, there is a linear-size set $W$
of wedges of the form $\mathcal{W}(b,u,v)$ to consider, and we need to check if there is an incidence between
any red point and an element of $W$. Note that the elements of $W$ can be divided into two
groups, such that in each group the intersections of the wedges with the interior of $CH(R)$
are pairwise disjoint (see Figure~\ref{fig:deciding-b}). The wedge $\mathcal{W}(b,u,v)$ goes to the first
group when $z$ is the clockwise neighbor of $b$ in the boundary of $CH(B)$, and to the other group otherwise.
Then, for each red point $r$, one can decide in $O(\log n)$ time
such an incidence.
\end{enumerate}

\begin{figure}[h]
	\centering
	\subfloat[]{
		\includegraphics[scale=0.6,page=1]{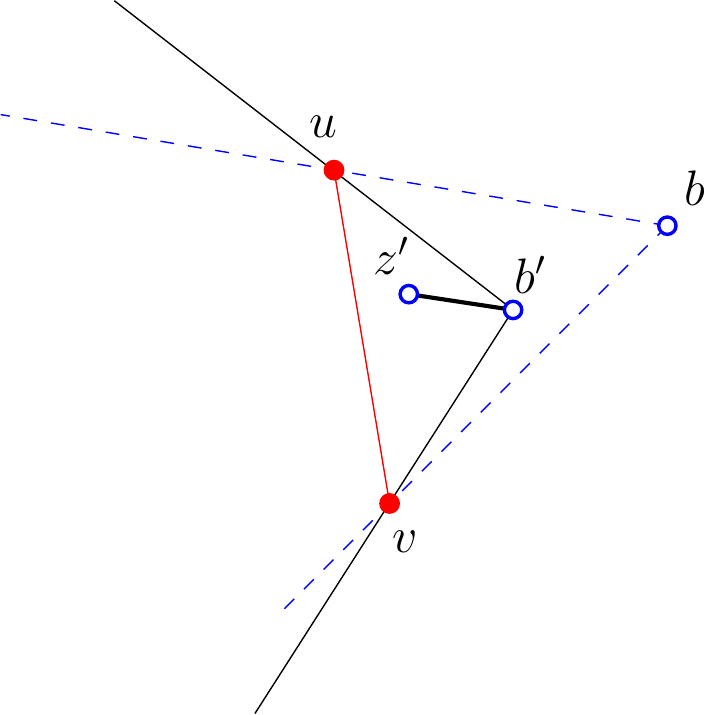}
		\label{fig:deciding-a}
	}\hspace{0.5cm}
	\subfloat[]{
		\includegraphics[scale=0.6,page=2]{img/decide-convex-4hole.pdf}
		\label{fig:deciding-b}
	}
	\caption{\small{Deciding the existence of a balanced convex $4$-hole.
	}}
	\label{fig:deciding}
\end{figure}

\medskip

{\bf Counting balanced $4$-holes:}
Adapting the algorithm of Mitchell et al.~\cite{mitchellRSW95} for counting convex polygons in planar
point sets, we can count the balanced $4$-holes of a bicolored
point set $S$ of $n$ points in $O(\tau(n))$ time,
where $\tau(n)$ is the number of empty triangles of $S$.

\medskip

{\bf Existence of balanced $2k$-holes in balanced point sets:} 
The arguments used to prove the existence of at least one balanced
$4$-hole in any point set $S=R\cup B$ with $|R|,|B|\ge 2$ 
(at the beginning of Section~\ref{sec:lower-bound})
do not directly apply to prove the existence of balanced $2k$-holes in
point sets $S=R\cup B$ with $|R|,|B|\ge k$. However, we can 
prove the following:

\begin{proposition}
For all $n\ge 1$ and $k\in[1..n]$, every point set $S=R\cup B$ with $|R|=|B|=n$
contains a balanced $2k$-hole.
\end{proposition}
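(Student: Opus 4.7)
My plan is to reduce the existence of a balanced $2k$-hole to the existence of a partition $S = S_1 \cup S_2$ with $S_1 \cap S_2 = \emptyset$, $|S_1 \cap R| = |S_1 \cap B| = k$, and $CH(S_1) \cap CH(S_2) = \emptyset$. Given such a partition, $CH(S_1) \cap S = S_1$, because the remaining $2(n-k)$ points of $S$ lie in $CH(S_2)$, which is disjoint from $CH(S_1)$. Any simple polygon on the $2k$ points of $S_1$ (which always exists for points in general position, for instance by sorting them radially around an interior point of $S_1$) has its interior contained in $CH(S_1)$, and is therefore empty of points of $S$. This produces a balanced $2k$-hole.

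To obtain the required partition, I would invoke the equitable subdivision theorem of Bespamyatnikh, Kirkpatrick, and Snoeyink, which generalizes the ham-sandwich theorem and guarantees exactly such balanced partitions for any prescribed split $k + (n-k) = n$. The extreme case $k = n$ is handled separately by taking a simple polygon whose vertex set is all of $S$ (for example, via a radial sort around an interior point): all points of $S$ lie on its boundary, so its interior vacuously avoids $S$. The case $k = 1$ is trivial, since the segment between any red and any blue point is a degenerate balanced $2$-hole.

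The main obstacle is the reliance on the equitable subdivision theorem, whose proof is nontrivial. A more elementary alternative would be an induction on $n$: when $CH(S)$ has both a red vertex $r$ and a blue vertex $b$, one removes them to obtain $S'$ with $|R'|=|B'|=n-1$, and a balanced $2k$-hole of $S'$ (existing by induction for $k \le n-1$) remains a balanced $2k$-hole of $S$, since $r$ and $b$ lie outside $CH(S')$. The hard part is the monochromatic hull case (e.g., $CH(S)$ entirely red with $CH(B) \subset CH(R)$): here no bichromatic pair of hull vertices can be peeled, and a more delicate argument---perhaps exploiting the nested hull structure as in Lemmas~\ref{lem:ch-contains-other-R=3} and~\ref{lem:ch-contains-other-R>=4,B>=2}---would be required.
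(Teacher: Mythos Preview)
Your reduction has a genuine gap: the linearly separable balanced split you ask for need not exist. Take $B$ to be the six vertices of a large regular hexagon and $R$ to be six points clustered near its centre, with $k=1$. Any half-plane containing exactly one blue point is a small cap near a single hexagon vertex and therefore contains no red point; conversely, any half-plane containing exactly one red point has its bounding line passing through the tiny red cluster and hence splits the hexagon $3$--$3$, so it contains three blue points. Thus no line yields one red and one blue on a side, and the partition with $CH(S_1)\cap CH(S_2)=\emptyset$ fails to exist. The Bespamyatnikh--Kirkpatrick--Snoeyink theorem does not rescue this: as stated it produces $g$ pieces each carrying the \emph{same} counts $(n/g,n/g)$, not an arbitrary $(k,n-k)$ split into two linearly separated parts.

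The paper sidesteps the obstruction by working radially rather than linearly. It takes a point $u\in S$ in the interior of $CH(S)$ (the convex-position case being handled directly), sorts the remaining $2n-1$ points angularly around $u$, assigns weights $\pm 1$ by colour, and runs a discrete intermediate-value argument on the sums over sliding windows of $2k-1$ consecutive points. These sums are all odd, differ by at most $2$ between consecutive windows, and total $\pm(2k-1)$; hence some window sum equals $\pm 1$, and that window together with $u$ gives $2k$ balanced points. Because they occupy a single angular wedge at $u$ containing no other point of $S$, the fan polygon on them is the desired balanced $2k$-hole. Note that what your argument really needs is only $CH(S_1)\cap S_2=\emptyset$, not $CH(S_1)\cap CH(S_2)=\emptyset$; the wedge construction delivers exactly this weaker ``balanced island'' property, but establishing it is essentially the entire proof, so the detour through an external partition theorem buys nothing.
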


\begin{proof}
If $S$ is in convex position then the result follows. Then, suppose that 
$S$ is not in convex position.
For every point $p\in R$ let $w(p):=1$, and for every $p\in B$ let $w(p):=-1$.
W.l.o.g.\ let $u\in B$ be a point in the interior of $CH(S)$, and $p_0,p_1,\ldots,p_{2n-2}$
denote the elements of $S\setminus\{u\}$ sorted radially in clockwise order around $u$.
For $i=0,1,\ldots,2n-2$, let $s_i:=w(p_i)+w(p_{i+1})+\ldots+w(p_{i+2k-2})$, where
subindices are taken modulo $2n-1$. Notice that all $s_i$'s are odd, and 
$s_i=1$ implies that the points 
$u,p_i,p_{i+1},\ldots,p_{i+2k-2}$ form a balanced $2k$-hole.

We have that $\sum_{i=0}^{2n-2}s_i=(2k-1)\sum_{i=0}^{2n-2}w(p_i)=2k-1$, which implies 
(given that $k\in[1..n]$)
that not all $s_i$'s can be greater than $1$ and that not all $s_i$'s can be smaller than $1$.
Suppose for the sake of contradiction that none of the $s_i$'s is equal to $1$. 
Then, there exist an $s_j<1$ and an $s_t>1$.
Since we further have that $s_i-s_{i+1}\in\{-2,0,2\}$ for all $i\in[0..2n-2]$, there must exist an element
among $s_{j+1},s_{j+2},\ldots,s_{t-1}$ which is equal to $1$, and the result thus follows.
\end{proof}

\medskip

{\bf Open problems:} As mentioned above,
we leave as open the problem of obtaining a lower bound
for the number of balanced $4$-holes in point sets $S=R\cup B$
in which either $|R|<0.382\cdot|B|$ or $|B|<0.382\cdot|R|$.
Another open problem
is to study lower bounds on the number of balanced $k$-holes,
for even $k\geq 6$. 

\section*{Acknowledgements}

The authors would like to thank the two
anonymous referees for their useful comments and suggestions.

\small

\bibliographystyle{plain}
\bibliography{4-holes}

\end{document}